\numberwithin{equation}{section}
\newtheorem*{theorem*}{Theorem}
\newtheorem{lemma}{Lemma}[section]
\newtheorem{assumption}{Assumption}[section]
\newtheorem{proposition}{Proposition}[section]
\newtheorem{remark}{Remark}[section]
\newtheorem{corollary}{Corollary}[section]
\newtheorem{example}{Example}[section]
\DeclareMathOperator{\DKL}{D_{\mathrm{KL}}}
\DeclareMathOperator{\Var}{\mathrm{Var}}
\begin{document}
\title{Guidance for twisted particle filter: a continuous-time perspective}
\author[a]{Jianfeng Lu\thanks{E-mail:jianfeng@math.duke.edu}}
\author[b]{Yuliang Wang\thanks{Email:YuliangWang$\_$math@sjtu.edu.cn}}
\affil[a]{Department of Mathematics, Department of Physics, Department of Chemistry, Duke University, Durham, NC 27708, USA.}
\affil[b]{School of Mathematical Sciences, Institute of Natural Sciences, Shanghai Jiao Tong University, Shanghai, 200240, P.R.China.}

\date{}
\maketitle

\begin{abstract}
The particle filter (PF), also known as sequential Monte Carlo (SMC), approximates high-dimensional probability distributions and their normalizing constants in the discrete-time setting. To reduce the variance of the Monte Carlo approximation, various twisted particle filters (TPFs) have been proposed, in which a twisting function is chosen or learned to modify the Markov transition kernel. Guided by existing control-based importance sampling algorithms in the continuous-time setting, we propose a novel algorithm called the ``Twisted-Path Particle Filter'' (TPPF), in which the twisting function is parameterized by a neural network and trained to minimize a specific KL-divergence between path measures. Numerical experiments illustrate the capability of the proposed algorithm.
\end{abstract}


\begin{keywords}
twisted particle filter, sequential Monte Carlo, stochastic optimal control, importance sampling, neural network, policy gradient
\end{keywords}

\textbf{MSC codes:} 65C05, 93E20, 60G35

\section{Introduction}\label{sec:intro}
The particle filter (PF), or sequential Monte Carlo (SMC), has found wide application across computational statistics and machine learning. It has gained increasing popularity in tasks such as statistical inference for state space models \cite{durbin2012time,sarkka2023bayesian,liu1998sequential,pitt1999filtering} and complex static models \cite{chopin2002sequential,del2006sequential}, and capability and safety pipelines for large language models (LLM) \cite{ziegler2019fine,stiennon2020learning,zou2023universal,zhao2024probabilistic}. Broadly, the particle filter simulates a system of particles over time and is particularly suited for estimating statistical quantities of the following form via Monte Carlo:
\begin{equation}\label{eq:Zintro}
    Z = \mathbb{E}\left[\prod_{k=0}^n g_k(X_k)\right],
\end{equation}
where $g_k$ ($0\leq k\leq n$) are deterministic positive functions and $(X_k)_{k=0}^n$ is a discrete-time Markov chain in $\mathbb{R}^d$ with transition kernel $P(x,dy)$. The simplest and most classical algorithm for approximating \eqref{eq:Zintro} is the bootstrap particle filter (BPF) of Gordon, Salmond, and Smith \cite{gordon1993novel} (see Algorithm \ref{alg:bpf} below). However, to reach a desired precision, the required sample size $N$ can be prohibitive, since the variance of $\prod_{k=0}^n g_k(X_k)$ can be large. To reduce this variance, various ``twisted'' particle filters have been developed \cite{heng2020controlled,guarniero2017iterated,bon2022monte,whiteley2014twisted,branchini2021optimized,zhao2024probabilistic,lawson2018twisted,lawson2022sixo}. The general idea of the twisted particle filter (TPF) is the following: choose a positive twisting function $\varphi(k,x)$ ($1\leq k \leq n$, $x\in \mathbb{R}^d$), run the twisted Markov chain $X^\varphi$ with transition kernel $P^\varphi_k(x,\cdot) \propto \varphi(k,\cdot)P(x,\cdot)$ at step $k$, and modify the function sequence $g_k$ according to $\varphi$ (denoted by $g_k^\varphi$; see \eqref{eq:twistedtransition}--\eqref{eq:ellk}). The twisted model is designed to achieve:
\begin{enumerate}
    \item[(1)] The statistical quantity $Z$ is preserved (i.e. $\mathbb{E}\left[\prod_{k=0}^n g_k(X_k)\right] = \mathbb{E}\left[\prod_{k=0}^n g_k^\varphi(X_k^\varphi)\right]$)
    \item[(2)] The variance of $\prod_{k=0}^n g_k^\varphi(X_k^\varphi)$ is significantly reduced with a suitable choice of $\varphi$.
\end{enumerate}
This is precisely importance sampling \cite{kloek1978bayesian,glynn1989importance,awad2013zero} via a change of measure on discrete path space, and the key step in constructing a twisted model is finding the optimal twisting function that minimizes the variance above. In fact, most existing TPF algorithms adopt a ``look-ahead'' strategy when constructing or learning the twisting function $\varphi$ (see, e.g., \cite{pitt1999filtering,heng2020controlled,guarniero2017iterated}), motivated by the explicit expression of the optimal twisting function (see \eqref{eq:recursive} and \eqref{eq:lookahead} below).

Similar minimization problems are well-studied in the continuous-time setting, where the importance sampling task can be recast as a stochastic optimal control problem. A substantial line of research has developed algorithms for this problem \cite{hartmann2016model,hartmann2017variational,ribera2024improving,richter2021solving}, with applications ranging from molecular dynamics \cite{hartmann2012efficient,ribera2024improving} to mathematical finance \cite{glasserman2004monte,glasserman2005importance}. In this paper, for a continuous-time Markov process $(X_t)_{t\geq 0}$ and deterministic functions $h$, $g$, we are particularly interested in a target quantity of the form
\begin{equation}\label{eq:targetintro2}
    Z = \mathbb{E}\left[\exp\left(\int_0^T h(s,X_s)ds \right) g(X_T)\right],
\end{equation}
This continuous-time framework, and its associated control-based importance sampling algorithms, will serve as guidance for designing novel TPF algorithms in the discrete-time setting.

Although the discrete-time particle filter and continuous-time control-based importance sampling are typically studied in separate communities, they share a common mathematical structure. In this paper, we build a bridge between the two by setting up both models side by side in Section \ref{sec:ISsetting}. Appendix \ref{sec:convergence} provides a rigorous convergence proof from the discrete-time model to the continuous-time model under suitable assumptions. Based on this connection, we propose a novel algorithm---the ``Twisted-Path Particle Filter'' (TPPF)---directly guided by a family of control-based importance sampling algorithms and their theoretical foundations, including the classical Donsker-Varadhan variational principle \cite{boue1998variational,deuschel2001large,hartmann2017variational}. We expect TPPF to be more versatile, less problem-dependent, and better-behaved in high-dimensional state spaces. Section \ref{sec:num} validates TPPF numerically and compares it with existing approaches.

\subsection{Related works}\label{sec:relatework}

\paragraph{Twisted particle filters (TPF)} The idea of twisting the model in the discrete-time setting is well established, and various twisted particle filter algorithms have been proposed across computational statistics and machine learning. In \cite{pitt1999filtering}, the authors proposed the so-called ``fully-adapted auxiliary particle filter'' (FA-APF), applying the ``look-ahead'' strategy to twist the model with $g_k(\cdot)$ as the twisting function at each discrete time $k$. In \cite{guarniero2017iterated}, the authors proposed the ``iterated auxiliary particle filter'' (iAPF) algorithm, which uses the recursive relation of the optimal twisting function (see \eqref{eq:recursive} below) and learns the twisting function iteratively via Galerkin approximation with Gaussian basis functions. Similar approaches were adopted in \cite{heng2020controlled}, where the authors studied the problem from the viewpoint of discrete-time optimal control and improved the learning structure for better numerical stability. This fixed-point iteration method for approximating the optimal twisting function \cite{guarniero2017iterated,heng2020controlled} is similar in spirit to the temporal difference (TD) method \cite{sutton1988learning,doya1995temporal,mou2024bellman} widely used in the reinforcement learning (RL) community. The same iteration structure has been applied and improved in other works such as \cite{bon2022monte, lawson2018twisted}: \cite{bon2022monte} uses rejection sampling to generate samples from the twisted Markov chain, while \cite{lawson2018twisted} uses neural networks to parameterize the twisting function in place of Galerkin approximation. Beyond TD-type methods, the authors of \cite{branchini2021optimized} proposed an ``optimized auxiliary particle filter'' (OAPF) algorithm, which solves a convex optimization problem at each discrete time $k$ to update the weights and positions of particles while simulating the Markov chain.

\paragraph{Learning-based TPF algorithms} Most TPF algorithms above are based on optimization over a Galerkin function space. More recently, motivated by applications to large language models (LLM) in the machine learning community, several deep-learning-based structures for approximating the optimal twisting function have been proposed. These learning-based algorithms include the Contrastive Twist Learning (CTL) in \cite{zhao2024probabilistic}, the Future Discriminators for Generation (FUDGE) in \cite{yang2021fudge}, and the SI$\mathcal{X}$O method in \cite{lawson2022sixo}. In these frameworks, the authors parameterize the twisting function by a neural network, choose the sum of some specially designed quantities along the time axis as the loss function (for instance, in \cite{zhao2024probabilistic}, the KL-divergence at each time $k$ determined by the current twisting function), and train the network to approximate the optimal twisting function. We discuss the choice of loss function further in Section \ref{sec:TPPF}. Compared with Galerkin-based methods, such neural network approximations tend to scale better to high dimensions.

\paragraph{Continuous-time importance sampling (IS) via stochastic optimal control} As mentioned above, this paper studies the (discrete-time) particle filters based on some well-known control-based importance sampling algorithms. From the theoretical perspective, control-based importance sampling algorithms are well-studied in the importance sampling community. Various (continuous-time) importance sampling algorithms have been proposed, such as \cite{kloek1978bayesian,glynn1989importance,awad2013zero}. We are particularly interested in importance sampling algorithms for the path-dependent target \eqref{eq:targetintro2}, and focus on a sequence of control-based ones due to the zero-variance property related to the solution of the stochastic optimal control problem. A large number of control-based importance sampling algorithms admit the structure of the continuous-time policy gradient (PG) method \cite{williams1992simple} (also called the iterative diffusion optimization (IDO) in some other literature \cite{richter2021solving}), and the corresponding details will be discussed later in Section \ref{sec:motivation}. There are various classical choices for the loss function in the PG iteration in the existing literature. In the ``cross-entropy'' algorithm \cite{zhang2014applications,hartmann2016model}, the loss is chosen as the KL-divergence between path measures $\DKL(P^{u^{*}} \| P^{u})$, where $P^u$ is the path measure induced by the controlled SDE adding a control $u$ to the drift, and $P^{u^{*}}$ is the path measure corresponding to the optimal control $u^{*}$. Note that different from some other statistical inference problems, under the current setting, $\DKL(P^{u^{*}} \| P^{u})$ has an explicit expression and is even quadratic after suitable parameterization of the control function $u$. Other choices of the loss function include the relative entropy loss $\DKL(P^{u} \| P^{u^{*}})$ \cite{hartmann2017variational,lie2016strongly,ribera2024improving}, the variance loss (or log-variance loss) $\Var_{P^v}(\tfrac{dP^{u^{*}}}{dP^u})$ (or $\Var_{P^v}(\log\tfrac{dP^{u^{*}}}{dP^u})$) for some suitable basis path measure $P^v$ \cite{richter2021solving}, etc. Some theoretical bounds for the KL-type losses above were established in \cite{hartmann2024nonasymptotic}. Besides the PG-based algorithms, other related importance sampling methods include the well-known forward-backward stochastic differential equation (FBSDE) approaches \cite{kebiri2019adaptive,fuhrman2010stochastic,wang2018introduction}, where one approximates the target value $Z$ via the solution of some SDE with given terminal-time state and a forward filtration.

\paragraph{Learning-based importance sampling (IS) algorithms} Many of the conventional methods above may suffer from the curse of dimensionality. In order to solve the related stochastic optimal control problem discussed above, a wide range of learning-based algorithms are available in reinforcement learning for the continuous-time setting. In these frameworks, the learning objective is usually parameterized by neural networks, and over decades a variety of these algorithms have been proposed and studied, such as the classical (soft) policy gradient \cite{sutton1999policy,williams1992simple}, actor-critic with temporal difference learning \cite{zhou2024solving,mou2024bellman} or with deep backward SDE \cite{han2018solving}, soft Q-learning \cite{haarnoja2017reinforcement,schulman2017equivalence}, etc.
Other improvements to these learning-based frameworks include adding entropy regularization to the optimal control problem \cite{wang2020reinforcement}, combining the metadynamics algorithms when $X_s$ in \eqref{eq:targetintro2} is trapped in some metastable region \cite{ribera2024improving}, and applying them to discrete-observed continuous-time filtering problems \cite{chopin2023computational}.

\subsection{Main contributions}\label{sec:contribution}

In this work, we provide novel insights into the study of particle filters. We begin by building a continuous-time model, which can be viewed as a continuous limit of the discrete-time model of interest. Guided by classical control-based importance sampling algorithms for the continuous-time model, we design new TPF algorithms in the discrete-time setting, including our TPPF algorithm. This continuous-time perspective also opens promising future directions for the design of TPF algorithms.

Notably, TPPF takes the KL divergence between path measures as its loss function and learns the optimal twisting function via neural networks. From a reinforcement learning viewpoint, our approach is a policy-gradient method, whereas many existing approaches follow the temporal-difference paradigm. Compared with Galerkin-based approximations of the optimal twisting function, TPPF overcomes the curse of dimensionality in many models. Moreover, because its learning procedure is problem-independent---unlike popular Galerkin-based methods such as \cite{heng2020controlled,guarniero2017iterated,bon2022monte}---TPPF is more robust and applies to a wider range of models.



The rest of the paper is organized as follows. Section \ref{sec:ISsetting} introduces the basic settings of both the discrete- and continuous-time cases, discusses importance sampling for each model via the twisting function or control variate respectively, and presents the existence and zero-variance property of the optimal twisting function and optimal control. Motivated by existing algorithms for the continuous-time model, Section \ref{sec:TPPF} proposes the novel particle filter algorithm---the ``Twisted-Path Particle Filter'' (TPPF). Section \ref{sec:num} reports several numerical examples comparing TPPF with existing approaches. Section \ref{sec:conclusion} concludes and outlines possible future work. The Appendix collects proofs of technical lemmas and propositions, along with a rigorous convergence analysis from the discrete- to the continuous-time model in Appendix \ref{sec:convergence}.

\section{Importance sampling for discrete-time and continuous-time models}\label{sec:ISsetting}
In this section, we introduce both discrete-time and continuous-time models, as well as importance sampling methods based on change of measure for both models. 

\subsection{Notation}
We first introduce the notation used in the rest of the article. Given integers $n_1\leq n_2$ and a sequence $(x_k)_{k \in \mathbb{N}}$, we write the subsequence $x_{n_1:n_2} := (x_{n_1},\dots,x_{n_2})$. For the discrete-time model, to be introduced in Section \ref{sec:dissetting} below, we denote the target quantity, the discrete Markov chain, the transition kernel, and the associated deterministic function sequence by $Z_{dis}$, $\hat{X}$, $\hat{P}$, and $\hat{g}$. We also denote the twisting function by $\varphi(k,x)$ and the optimal twisting function by $\varphi^{*}(k,x)$ ($k \in \mathbb{N}$, $1 \leq k \leq n$, $x\in \mathbb{R}^d$). For twisting function $\varphi$, we denote by $\hat{X}^\varphi$ the twisted Markov chain, by $\hat{P}^\varphi$ the twisted transition kernel, and by $\hat{P}[\varphi](k,x) := \int \varphi(k,y')\hat{P}(x,dy')$ the normalizing constant associated with $\hat{P}^\varphi_k(x,dy) \propto_y \varphi(k,y) \hat{P}(x,dy)$.
For the continuous-time model, to be introduced in Section \ref{sec:consetting} below, we denote the target quantity, the continuous Markov process, the transition kernel, and the associated deterministic functions by $Z_{con}$, $X$, $P$ and ($h$, $g$). For a control variate $u(t,x)$, we denote by $X^u$ the solution to the controlled SDE.
In the proposed TPPF algorithm, we denote by $P^{\varphi}$ the (discrete) path measure induced by the twisting function $\varphi$, and by $r(P^\varphi)$ the relative variance associated with $P^\varphi$.

\subsection{Discrete-time model and optimal twisting}\label{sec:dissetting}
Let us begin with the discrete-time model. 
Fix a positive integer $n$.  Given a (time-homogeneous) Markov transition kernel $\hat{P}(x,dy)$ in $\mathbb{R}^d$ (for simplicity we assume the corresponding transition density exists and still denote it by $\hat{P}(\cdot,\cdot)$), a sequence of bounded, continuous, nonnegative functions $\hat{g}_k(\cdot)$ ($0 \leq k \leq n$), and a (deterministic) initial point $x \in \mathbb{R}^d$, the general discrete-time model is determined by the triple $(\hat{P}(\cdot,\cdot), (\hat{g}_k(\cdot))_k, x)$: For the discrete Markov chain $\hat{X}_{0:n}$ with transition kernel $\hat{P}(\cdot,\cdot)$ and initial state $\hat{X}_0 = x$, the statistical quantity of interest is given by
\begin{equation}\label{eq:Zdisdef}
    Z_{dis}(x) := \mathbb{E}_x\left[\prod_{k=0}^n \hat{g}_k(\hat{X}_k)\right],
\end{equation}
where $\mathbb{E}_x\left[\cdot\right] := \mathbb{E}\left[ \cdot \mid \hat{X}_0 = x\right]$.

Such a general model is often called the Feynman-Kac model \cite{del2004feynman,bon2022monte}, and the quantity $Z_{dis}(x)$ corresponds to the terminal marginal measure in it. Note that throughout this paper, we consider the Markov chain with a homogeneous transition kernel and a fixed deterministic initial state. As is common in the literature, such settings are only designed to simplify the arguments and notations, and all results in this paper can be easily extended to the Markov chains with inhomogeneous transition kernels and with random initial conditions.

\begin{remark}[state space model]\label{rmk:statespacedis}
A special case of the general model above is the Hidden Markov Model (a state space model with discrete observations), where the functions $\hat{g}_{0:n}(\cdot)$ are determined by the random observation values $y_{1:n}$ with an observation conditional density function $\hat{g}_{y|x}^k(\cdot\mid \hat{X}_k)$ at each time $k$:
\begin{equation}
    \hat{X}_i\mid \hat{X}_{i-1} \sim \hat{P}(\hat{X}_{i-1}, \cdot), \quad \hat{Y}_i\mid \hat{X}_{i} \sim \hat{g}_{y|x}^i(\cdot\mid \hat{X}_i),\quad 0\leq i \leq n.
\end{equation}
Then, given the observations $y_{0:n}$, the functions $\hat{g}_{0:n}(\cdot)$ are given by
\begin{equation}
    \hat{g}_k(x) := \hat{g}^k_{y|x}(y_k \mid x).
\end{equation}
Consequently, conditioning on the observations $y_{0:n}$, the distribution of $\hat{X}_{1:n}$ is proportional to
\begin{equation*}
    \hat{g}_0(x_0)\prod_{k=1}^n \hat{g}_k(x_k) \hat{P}(x_{k-1},x_k) dx_{1:n},
\end{equation*}
and its normalizing constant is $Z_{dis}(x)$ defined in \eqref{eq:Zdisdef}. 
\end{remark}

Now, with the triple $(\hat{P}(\cdot,\cdot), (\hat{g}_k(\cdot))_k, x)$ of the discrete model, we aim to calculate $Z_{dis}(x)$. A standard approach is the particle filter method (also known as sequential Monte Carlo) \cite{doucet2009tutorial,djuric2003particle,doucet2000sequential,doucet2001introduction,liu1998sequential}, where the expectation in $Z_{dis}(x)$ is simulated via Monte Carlo. A classical particle filter algorithm is the following bootstrap particle filter (BPF) \cite{gordon1993novel}:
\begin{algorithm}[H]
	\renewcommand{\algorithmicrequire}{\textbf{Input:}}
	\renewcommand{\algorithmicensure}{\textbf{Output:}}
	\caption{Bootstrap particle filter (BPF)}
	\label{alg:bpf}
	\begin{algorithmic}
            \STATE Given $(\hat{P}(\cdot,\cdot), (\hat{g}_k(\cdot))_k, x)$ and particle number $N$. Set $\zeta_0^i = x$, $1\leq i \leq N$.
            \STATE For $k = 1,\dots,n$, 
            \STATE 1. Calculate the particle weights
            \begin{equation}\label{eq:bpfweight}
                W_{k-1}^j =  \dfrac{\hat{g}_{k-1}(\zeta_{k-1}^j)}{\sum_{\ell=1}^N \hat{g}_{k-1}(\zeta_{k-1}^\ell)},\quad 1 \leq j \leq N.
            \end{equation}
            \STATE 2. Sample independently
            \begin{equation}\label{eq:bpfresample}
                \zeta_k^i \sim \sum_{j=1}^N W_{k-1}^j\hat{P}(\zeta^j_{k-1},\cdot) ,\quad 1\leq i \leq N.
            \end{equation}
		\ENSURE  $\zeta_k^i$ ($k=0,\dots,n$, $i = 1,\dots, N$), $Z^N_{dis}(x) (:= \prod_{k=0}^n \frac{1}{N}\sum_{i=1}^N \hat{g}_k(\zeta_k^i))$.
	\end{algorithmic}  
\end{algorithm}

Note that for each $k$, in \eqref{eq:bpfresample}, particles are resampled according to the weights $W_{k-1}^{1:N}$ defined in \eqref{eq:bpfweight}.
An effective way to implement \eqref{eq:bpfresample} is the ancestor-prediction method (see for instance \cite{guarniero2017iterated, bon2022monte}), namely, at $k$-th step,
\begin{enumerate}
    \item Sample ancestors $A_{k-1}^i \sim \mathcal{C}(W_{k-1}^1, \dots, W_{k-1}^N)$, $1\leq i \leq N$.
    \item Sample predictions $\zeta_k^i|\zeta^{A_{k-1}^i}_{k-1} \sim \hat{P}(\zeta^{A_{k-1}^i}_{k-1},\cdot)$, $1 \leq i \leq N$.
\end{enumerate}
Above, $\mathcal{C}(\cdot,\dots,\cdot)$ denotes the categorical distribution to sample the indexes. Moreover, the resampling step does not need to occur at each step, and one improvement is the $\kappa$-adapted resampling \cite{kong1994sequential,liu1995blind,del2012adaptive}, where the resampling step only occurs when the effective sample size is smaller than $\kappa N$ for some fixed $\kappa \in (0,1)$. 

For any $w^{1:N}$ with $\sum_{i=1}^N w^i = 1$, the effective sample size $ESS(w^{1:N})$ is defined by
\begin{equation}
    ESS(w^{1:N}) := 1 / \sum_{i=1}^N (w^i)^2,
\end{equation}
and it is a good criterion for evaluating a particle filter algorithm---higher ESS usually means better performance.

\medskip 

A shortcoming of BPF is the relatively high variance, especially when the dimension $d$ or the length of the Markov chain $n$ is large.
The twisted particle filter was introduced \cite{heng2020controlled,guarniero2017iterated,bon2022monte,whiteley2014twisted,branchini2021optimized,zhao2024probabilistic,lawson2018twisted,lawson2022sixo} to reduce the variance of the computed $Z^N_{dis}(x)$ based on importance sampling.  The general idea is to use the (discrete-time) change of measure, or equivalently, a sequence of twisting functions $\varphi(k,\cdot)$ ($1\leq k \leq n$). Then, we consider the twisted model with the triple $(\hat{P}^\varphi(\cdot,\cdot), (\hat{g}_k^\varphi(\cdot))_k, x)$ defined by
\begin{equation}\label{eq:twistedtransition}
    \hat{P}^\varphi_k(x,dy) := \frac{\varphi(k,y) }{\hat{P}[\varphi](k,x)}\hat{P}(x,dy),\quad\hat{P}[\varphi](k,x) := \int \varphi(k,y') \hat{P}(x,dy')\quad 1\leq k\leq n,
\end{equation}
\begin{equation}\label{eq:twistedgk}
    \hat{g}^{\varphi}_k(x) := \hat{g}_k(x) \ell^{\varphi}_k(x),\quad 0\leq k\leq n,
\end{equation}
with
\begin{equation}\label{eq:ellk}
    \ell_0^{\varphi}(x) := \hat{P}[\varphi](1,x),\quad \ell_n^\varphi(x) := \frac{1}{\varphi(n,x)},\quad \ell^{\varphi}_k(x) := \frac{\hat{P}[\varphi](k+1,x)}{\varphi(k,x)},\quad 1\leq k\leq n-1.
\end{equation}
Then it is not difficult to check that the twisted model preserves the quantity \eqref{eq:Zdisdef}. We conclude this property in the following proposition (see Appendix \ref{app:proofsec2} for a detailed proof via a straightforward calculation):
\begin{lemma}\label{eq:discretegirsanov}
Consider the twisted model defined in \eqref{eq:twistedtransition}--\eqref{eq:ellk}. Recall the quantity $Z_{dis}(x)$ defined in \eqref{eq:Zdisdef}. Then
\begin{equation}
    Z_{dis}(x) = \mathbb{E}_x\left[\prod_{k=0}^n\hat{g}_k^\varphi(\hat{X}^\varphi_k)\right],
\end{equation}
where $\hat{X}^\varphi$ is the twisted Markov chain with the (time-inhomogeneous) transition kernel $\hat{P}_k^{\varphi}(\cdot,\cdot)$, namely, $\hat{X}^\varphi_{k} \sim \hat{P}^\varphi_k(\hat{X}^\varphi_{k-1}, \cdot)$ ($1\leq k \leq n$). 
\end{lemma}

\begin{remark}[discrete-time change of measure]\label{rmk:discretegirsanov}
Lemma \ref{eq:discretegirsanov} is in fact the Girsanov's transform in the discrete-time setting. Denote $\hat{\mathcal{P}}$ the law of the untwisted Markov chain $\hat{X}_{0:n}$ with the transition kernel $\hat{P}(\cdot,\cdot)$, and denote $\hat{\mathcal{P}}^\varphi$ the law of the twisted Markov chain $\hat{X}^\varphi_{0:n}$ with the transition kernel $\hat{P}^\varphi(\cdot,\cdot)$. Then, the Girsanov transform gives
\begin{equation}
    Z_{dis}(x) = \mathbb{E}_x^{X_{0:n}\sim \hat{\mathcal{P}}}\left[\prod_{k=0}^n \hat{g}_k(X_k)\right] = \mathbb{E}_x^{X_{0:n}\sim \hat{\mathcal{P}}^\varphi}\left[\prod_{k=0}^n \hat{g}_k(X_k) \frac{d\hat{\mathcal{P}}}{d\hat{\mathcal{P}}^\varphi}(X)\right],
\end{equation}
where the Radon-Nikodym derivative is given by
\begin{equation*}
    \frac{d\hat{\mathcal{P}}}{d\hat{\mathcal{P}}^\varphi}(X) = \prod_{k=0}^n \ell_k^\varphi(X_k) = \prod_{k=1}^n \frac{\hat{P}[\varphi](k,X_{k-1})}{\varphi(k,X_k)}.
\end{equation*}
    
\end{remark}

With the twisted model \eqref{eq:twistedtransition} - \eqref{eq:ellk} and Lemma \ref{eq:discretegirsanov}, it is natural to consider the following twisted particle filter (TPF) method \cite{heng2020controlled,guarniero2017iterated,bon2022monte,whiteley2014twisted,branchini2021optimized,zhao2024probabilistic,lawson2018twisted,lawson2022sixo}, whose output $Z^{N,\varphi}_{dis}(x)$ defined below approximates our target quantity $Z_{dis}(x)$ due to the law of large numbers (see for instance, Proposition 3 in \cite{guarniero2017iterated}, or Section 3.6 in \cite{doucet2009tutorial}):
\begin{algorithm}[H]
	\renewcommand{\algorithmicrequire}{\textbf{Input:}}
	\renewcommand{\algorithmicensure}{\textbf{Output:}}
	\caption{Twisted particle filter (TPF)}
	\label{alg:tpf}
	\begin{algorithmic}
            \STATE Given $\varphi(\cdot,\cdot)$, $(\hat{P}^{\varphi}(\cdot,\cdot), (\hat{g}_k^{\varphi}(\cdot))_k, x)$ and particle number $N$. Set $\zeta_0^i = x$, $1\leq i \leq N$.
            \STATE For $k = 1,\dots,n$,
            \STATE 1. Calculate the particle weights
            \begin{equation}\label{eq:tpfweight}
                W_{k-1}^j =  \dfrac{\hat{g}^\varphi_{k-1}(\zeta_{k-1}^j)}{\sum_{\ell=1}^N \hat{g}^\varphi_{k-1}(\zeta_{k-1}^\ell)},\quad 1\leq j \leq N.
            \end{equation}
            \STATE 2. Sample independently
            \begin{equation}\label{eq:tpfresample}
                \zeta_k^i \sim \sum_{j=1}^N W_{k-1}^j\hat{P}^\varphi(\zeta^j_{k-1},\cdot) ,\quad 1\leq i \leq N.
            \end{equation}
		\ENSURE  $\zeta_k^i$ ($k=0,\dots,n$, $i = 1,\dots, N$), $Z^{N,\varphi}_{dis}(x) (:= \prod_{k=0}^n \frac{1}{N}\sum_{i=1}^N \hat{g}^{\varphi}_k(\zeta_k^i))$.
	\end{algorithmic}  
\end{algorithm}

Now since the $\varphi$-twisted model (recall its definition \eqref{eq:twistedtransition}--\eqref{eq:ellk}) obtains an unbiased estimation for the quantity $Z_{dis}(x)$ by Lemma \ref{eq:discretegirsanov}, it is natural to ask the following question: How do we choose $\varphi$ to minimize the variance?
In fact, it is possible to choose an optimal twisting function sequence $\varphi^{*}(k,\cdot)$ ($1\leq k \leq n$), under which the random variable $\prod_{k=0}^n\hat{g}_k^{\varphi^{*}}(\hat{X}^{\varphi^{*}}_k)$ is an unbiased estimate of $Z_{dis}(x)$ with zero variance. Optimal twisting functions have been widely studied in the literature \cite{bon2022monte,heng2020controlled,guarniero2017iterated,branchini2021optimized}. In particular, the optimal twisting function sequence $\varphi^{*}(k,\cdot)$ ($1\leq k \leq n$) is defined recursively as follows:
\begin{equation}\label{eq:recursive}
\begin{aligned}
    &\varphi^{*}(k,x) = \hat{g}_k(x) \int \varphi^{*} (k+1,y) \hat{P}(x,dy),\quad 0 \leq k \leq n-1,\\
    &\varphi^{*}(n,x) = \hat{g}_n(x).
\end{aligned}
\end{equation}
Moreover, we can prove the Feynman-Kac-like formula for $\varphi^{*}$:
\begin{equation}\label{eq:lookahead}
    \varphi^{*}(k,x) = \mathbb{E}\left[\prod_{i=k}^n \hat{g}_{i}(\hat{X}_{i}) \mid \hat{X}_{k} = x\right], \quad 0 \leq k \leq n, \quad x \in \mathbb{R}^d,
\end{equation}
and under the optimal twisting function $\varphi^{*}$, the random variable $\prod_{k=0}^n\hat{g}_k^\varphi(\hat{X}^\varphi_k)$ has zero variance. Consequently, the output of the TPF algorithm under $\varphi^{*}$ has zero variance, namely, $Z^{N,\varphi^{*}}_{dis}(x) = Z_{dis}(x)$. We provide more details and the rigorous proof in the Appendix.

Note that although the $\varphi^{*}$-twisted particle filter provides a perfect approximation for our target $Z_{dis}(x)$, it is impossible to implement the twisted particle filter (Algorithm \ref{alg:tpf}) associated with $\varphi^{*}$ in practice. The reason is that we need the pointwise value of the twisting function so that $\hat{X}^{\varphi^{*}}_{k}$ can be sampled from the distribution proportional to $\varphi^{*}(k,\cdot)\hat{P}(\hat{X}_{k-1}^{\varphi^{*}},\cdot)$. Therefore, it is crucial to find suitable approaches to approximate the optimal twisting functions $\varphi^{*}(k,\cdot)$ ($1\leq k \leq n$) to lower the variance of the particle filter algorithm. Moreover, due to the law of large numbers \cite{doucet2009tutorial,guarniero2017iterated}, a larger sample size $N$ and a better approximation of $\varphi^{*}$ give a better approximation for $Z_{dis}(x)$ (i.e. smaller variance of the numerical output $Z_{dis}^{N,\varphi}(x)$).

In Section \ref{sec:TPPF}, we will propose a new method to approximate $\varphi^{*}$, guided by the similar method for the continuous-time model described in the following subsection.

\subsection{Continuous-time model and optimal control}\label{sec:consetting}

Now let us introduce the continuous-time model. As mentioned in the introduction, the purpose of setting up this continuous-time model is:  It can be viewed as a continuous limit of the discrete-time model in Section \ref{sec:dissetting} we aim to study. Consequently, some more well-studied algorithms for the continuous-time model can inspire novel (discrete-time) TPF algorithms. 

For a fixed function $b: \mathbb{R}^d \rightarrow \mathbb{R}^d$, we consider the following SDE
\begin{equation}\label{eq:SDE}
    X_t = X_0 + \int_0^t b(X_s) ds + \sqrt{2}B_t,\quad X_0 = x\in\mathbb{R}^d.
\end{equation}
where $(B_t)_{t\geq 0}$ is the Brownian motion in $\mathbb{R}^d$ under the probability measure $P$.
The dynamics is thus a time-homogeneous Markov process with transition density (or Green's function) $P_t(y|x)$ satisfying a Fokker-Planck equation
\begin{equation}\label{eq:FPcon}
    \partial_t P_t(y|x) = -\nabla_y \cdot (b(y) \, P_t(y|x)) + \Delta_y P_t(y|x),\quad P_0(y|x) = \delta_x(y),
\end{equation}
where $\delta_x(\cdot)$ is the Dirac delta at $x$. Moreover, given functions $h: \mathbb{R}_{+} \times \mathbb{R}^d \rightarrow \mathbb{R}$, $g: \mathbb{R}^d \rightarrow \mathbb{R}$ ($g$ is nonnegative) and $T>0$, the general continuous model is determined by $(P;h,g;x)$, and the corresponding statistical quantity of interest is
\begin{equation}\label{eq:zcon}
    Z_{con}(x) = \mathbb{E}_x\left[e^{\int_0^T h(s,X_s) ds}g(X_T)\right],
\end{equation}

Now consider the general continuous model determined by $(P;h,g;x)$. Similarly to the discrete-time model, we focus on variance reduction via control-based importance sampling, which is based on a change of measure, as discussed in the discrete case. In detail, the change of measure is constructed by adding a control variate $u$ to the drift and considering the following controlled SDE:
\begin{equation}\label{eq:controlledSDE}
    X^u_t = X_0^u + \int_0^t \left(b(X_s^u) + \sqrt{2}u(s,X_s^u) \right) ds + \sqrt{2}B_t, \quad X_0^u = x \in\mathbb{R}^d.
\end{equation}
Then, denoting the path measures $\mathcal{P} := \text{Law}(X_{[0,T]})$ ($X$ solving \eqref{eq:SDE}) and $\mathcal{P}^u := \text{Law}(X^u_{[0,T]})$ ($X^u$ solving \eqref{eq:controlledSDE}), the change of measure gives
\begin{equation}
\begin{aligned}
    Z_{con}(x) &= \mathbb{E}_x^{X_{[0,T]}\sim \mathcal{P}}\left[e^{\int_0^T  h(s,X_s) ds} g(X_T)\right]\\
    &= \mathbb{E}_x^{X_{[0,T]}\sim \mathcal{P}^u}\left[e^{\int_0^T  h(s,X_s) ds} g(X_T) \frac{d\mathcal{P}}{d\mathcal{P}^u}(X)\right] = \mathbb{E}_x\left[e^{\int_0^T  h(s,X^u_s) ds} g(X^u_T)\frac{d\mathcal{P}}{d\mathcal{P}^u}(X^u)\right],
\end{aligned}
\end{equation}
where the Radon-Nikodym derivative has the following expression due to Girsanov's theorem:
\begin{equation}
    \frac{d\mathcal{P}}{d\mathcal{P}^u}(X^u) = \exp \left(-\int_0^T u(s,X^u_s) \cdot dB_s - \frac{1}{2}\int_0^T |u(s,X^u_s)|^2 ds \right).
\end{equation}
Our goal is to minimize the variance of the random variable $e^{\int_0^T  h(s,X^u_s) ds} g(X^u_T)\frac{d\mathcal{P}}{d\mathcal{P}^u}(X^u)$ for $u$ belonging to some suitable admissible set. Luckily, we can find an optimal control (similar to the discrete case) satisfying a zero-variance property, and then we only need to find suitable approaches to approximate the optimal control. In fact, the optimal control $u^{*}$ is given by
\begin{equation}
    u^{*}(t,x) = \sqrt{2}\partial_x \log v^{*}(t,x),
\end{equation}
where $v^{*}(t,x)$ satisfies the following backward Kolmogorov (parabolic) equation:
\begin{equation}\label{eq:optimalu}
    \begin{aligned}
        &-\partial_t v^{*}(t,x) = b(x) \cdot \nabla_x v^{*}(t,x) + \Delta_x v^{*}(t,x) + h(t,x) v^{*}(t,x), \quad 0 \leq t \leq T,\\
        &v^{*}(T,x) = g(x).
    \end{aligned}
\end{equation}
Moreover, we can show that under the control $u^{*}$, the random variable $$e^{\int_0^T  h(s,X^{u^{*}}_s) ds} g(X^{u^{*}}_T)\frac{d\mathcal{P}}{d\mathcal{P}^{u^{*}}}(X^{u^{*}})$$ 
has zero variance. We provide more details and a rigorous proof in the Appendix.
In Section \ref{sec:TPPF}, we will discuss detailed methods to find this $u^{*}$, mainly based on the classical Donsker-Varadhan variational principle.

\begin{remark}[Connection between the discrete-time and continuous-time models]
As a final remark, under suitable settings, the continuous-time model in Section \ref{sec:consetting} is a continuous limit of the discrete-time model in Section \ref{sec:dissetting}. To see this, we rewrite $Z_{dis}(x)$ by
\begin{equation}
    Z_{dis}(x) = \mathbb{E}_x \left[\exp\left(\sum_{k=0}^{n-1} \int_{k\eta}^{(k+1)\eta} \eta^{-1} \log \hat{g}_k(\hat{X}_k) ds\right) \hat{g}_n(\hat{X}_n)\right],\quad \forall \eta > 0,
\end{equation}
which shares a similar structure with \eqref{eq:zcon}. A rigorous proof is given in Appendix \ref{sec:convergence}, mainly based on the Feynman-Kac representation. Therefore, we can design novel particle filters for the discrete-time model guided by some more well-studied algorithms for the continuous-time model. We give more details in the next section.

\end{remark}

\section{Twisted-Path Particle Filter (TPPF)}\label{sec:TPPF}

In this section, motivated by mature algorithms from the continuous-time community, we propose a Twisted-Path Particle Filter (TPPF). In this new algorithm for the discrete-time model, we treat the KL-divergence between the twisted path measure and the optimal path measure as the loss function, parameterize the twisting function via neural networks, and approximate the optimal twisting function via suitable optimization methods such as stochastic gradient descent. Compared with other existing TPF algorithms, most of which are Galerkin-based, we expect this proposed algorithm to (1) perform better in high dimensions; (2) be more robust and have wider applications (in particular, be applicable to many nonlinear, non-Gaussian models); (3) have a relatively stronger theoretical foundation. We also remark that we only focus on the dynamical setting for the particle filter in our numerical examples, rather than static cases, such as the approach of annealed importance sampling (see for instance, Section 2.4 of \cite{heng2020controlled}).

\subsection{Motivation: importance sampling via variational characterization in continuous-time setting}\label{sec:motivation}
In the continuous-time model, it has been relatively well-studied to find the optimal control variate from a variational perspective. In order to seek the optimal control variate with the zero-variance property, instead of solving the PDE we derived in \eqref{eq:FPcon}, one can convert this to an optimal control problem from a variational perspective \cite{hartmann2017variational,zhang2014applications,ribera2024improving,hartmann2016model}. We first give some background on the so-called Donsker-Varadhan variational principle \cite{boue1998variational,deuschel2001large,hartmann2017variational}, which is independent of whether the model is time-continuous or time-discrete. For the reader's convenience, we provide a proof of Lemma \ref{lmm:DV} in the Appendix.

\begin{lemma}[Donsker-Varadhan variational principle]\label{lmm:DV}
Given some function $W(\cdot): \Omega \rightarrow \mathbb{R}$ and probability measure $P$ on $\Omega$, the following relation holds:
\begin{equation}\label{eq:DVdual}
    -\log\mathbb{E}^{X \sim P}\left[\exp\left(-W(X) \right)\right] = \inf_{Q \in \mathcal{P}(\Omega)}\bigl\{ \mathbb{E}^{X\sim Q}\left[W(X)\right] + \DKL(Q \| P)\bigr\},
\end{equation}
where the minimum is over all probability measures on $\Omega$.
\end{lemma}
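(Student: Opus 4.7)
The plan is to prove this by exhibiting an explicit minimizer and reducing the general bound to nonnegativity of KL divergence. Define the candidate tilted measure $Q^{\ast}$ by
\begin{equation*}
    \frac{dQ^{\ast}}{dP}(x) = \frac{e^{-W(x)}}{Z}, \qquad Z := \mathbb{E}^{X\sim P}\!\left[e^{-W(X)}\right],
\end{equation*}
assuming (as is standard, perhaps under an extra integrability/boundedness hypothesis on $W$) that $0 < Z < \infty$ so $Q^{\ast}$ is a well-defined probability measure absolutely continuous with respect to $P$. The key identity to exploit is $W = -\log(dQ^{\ast}/dP) - \log Z$, which rewrites the objective in terms of $Q^{\ast}$ rather than $W$.

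First I would dispose of the trivial case. If $Q$ is not absolutely continuous with respect to $P$, then $\DKL(Q \| P) = +\infty$, so the right-hand side of \eqref{eq:DVdual} is trivially at least $-\log Z$ for such $Q$. Hence without loss of generality we may restrict the infimum to $Q \ll P$.

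Next, for any $Q \ll P$, I would substitute the identity for $W$ into the objective:
\begin{align*}
    \mathbb{E}^{X\sim Q}[W(X)] + \DKL(Q\|P)
    &= -\mathbb{E}^{X\sim Q}\!\left[\log\frac{dQ^{\ast}}{dP}(X)\right] - \log Z + \mathbb{E}^{X\sim Q}\!\left[\log\frac{dQ}{dP}(X)\right] \\
    &= \mathbb{E}^{X\sim Q}\!\left[\log\frac{dQ/dP}{dQ^{\ast}/dP}(X)\right] - \log Z \\
    &= \DKL(Q \| Q^{\ast}) - \log Z.
\end{align*}
Since $\DKL(Q\|Q^{\ast}) \geq 0$ by Gibbs' inequality, with equality iff $Q = Q^{\ast}$, this yields
\begin{equation*}
    \mathbb{E}^{X\sim Q}[W(X)] + \DKL(Q\|P) \geq -\log Z = -\log \mathbb{E}^{X\sim P}\!\left[e^{-W(X)}\right],
\end{equation*}
and the bound is attained at $Q = Q^{\ast}$. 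Taking the infimum over $Q$ gives \eqref{eq:DVdual}.

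I expect the calculation itself to be routine; the main care goes into the integrability setup. One must justify that $\mathbb{E}^{Q^{\ast}}[W]$ and $\DKL(Q^{\ast}\|P)$ are finite (so that $Q^{\ast}$ actually achieves the infimum rather than the identity becoming an indeterminate $\infty - \infty$), and one must handle $Q \ll P$ with $\DKL(Q\|P) < \infty$ but $\mathbb{E}^Q[W]$ possibly $+\infty$ or $-\infty$, where a splitting $W = W_+ - W_-$ plus monotone/dominated convergence resolves the ambiguity. Under the mild assumption that $W$ is bounded (or bounded below), which is the setting relevant to the subsequent use of the lemma in this paper, all these points become automatic and the argument above is complete.
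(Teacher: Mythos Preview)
Your proof is correct and essentially equivalent to the paper's: the paper applies Jensen's inequality directly to $-\log$ after writing $\mathbb{E}^{P}[e^{-W}] = \mathbb{E}^{Q}[e^{-W}\,dP/dQ]$, whereas you rewrite the objective as $\DKL(Q\|Q^{\ast}) - \log Z$ and invoke nonnegativity of KL (Gibbs' inequality), which is itself Jensen in disguise. Both arguments identify the same minimizer $dQ^{\ast}/dP = e^{-W}/Z$.
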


Take $\Omega$ to be the path space corresponding to the continuous-time model and choose $W(\cdot)$ to be the path integral for some stochastic process $(X_s)_{0\leq s\leq T}$, as in \cite{hartmann2017variational,zhang2014applications,ribera2024improving,hartmann2016model}
\begin{equation}
    W(X) := -\int_0^T h(s,X_s) ds - \log g(X_T)
\end{equation}
for some $h(\cdot,\cdot)$, $g(\cdot)$ defined in Section \ref{sec:consetting}, the variational relation \eqref{eq:DVdual} becomes
\begin{multline}
    -\log\mathbb{E}^{X_{[0,T]}\sim P}\left[\exp \left(\int_0^T h(s,X_s) ds + \log g(X_T) \right)\right]\\
    =\inf_{Q\in \mathcal{P}(\Omega),Q\ll P}\mathbb{E}^{X_{[0,T]} \sim Q}\left[\int_0^T -h(s,X_s) ds - \log g(X_T) \right] + \DKL(Q\|P).
\end{multline}
In particular, if the path measure $P$ above is the law of some $X_{[0,T]}$ satisfying an SDE as introduced in Section \ref{sec:consetting}
\begin{equation}\label{eq:uncontrolled}
    dX_t = b(X_t)dt + \sqrt{2}dB, \quad 0 \leq t \leq T, \quad X_0 = x,
\end{equation}
then the probability $Q$ above is characterized by the path measure of the controlled SDE after adding a control variate $u$ to the drift. In detail, consider the controlled SDE
\begin{equation}\label{eq:controlled}
    dX^u_t = b(X^u_t)dt + \sqrt{2}u(t,X_t^u)dt + \sqrt{2}dB, \quad 0 \leq t \leq T, \quad X^u_0 = x.
\end{equation}
By Girsanov's theorem, the dual relation \eqref{eq:DVdual} becomes
\begin{multline}
    -\log \mathbb{E}_x\left[\exp \left(\int_0^T h(s,X_s) ds + \log g(X_T) \right)\right]\\
    = \inf_{u\in \mathcal{U}}\mathbb{E}_x\left[-\int_0^T h(s,X^u_s) ds + \frac{1}{2}\int_0^T |u(s,X_s^u)|^2 ds - \log g(X_T^u)\right],
\end{multline}
and under the setting of Section \ref{sec:consetting}, the set of admissible controls $\mathcal{U}$ is usually chosen to be as follows (see for instance Section 1 in \cite{richter2021solving})
\begin{equation}
    \mathcal{U}=\left\{u \in C^1\left(\mathbb{R}^d \times[0, T], \mathbb{R}^d\right):  u \text { grows at most linearly in } x \right\}.
\end{equation}
Denote
\begin{equation}
    J(u) := \mathbb{E}_x\left[-\int_0^T h(s,X^u_s) ds + \frac{1}{2}\int_0^T |u(s,X_s^u)|^2 ds - \log g(X_T^u)\right].
\end{equation}
It can be verified that the minimizer $u^{*}$ of the functional $J(u)$ yields a zero-variance importance sampler for the continuous setting, namely,
\begin{equation}
    \exp \left(\int_0^T h(s,X^{u^{*}}_s) ds + \log g(X^{u^{*}}_T) \right) \frac{dP}{dP^{u^{*}}}(X^{u^{*}}) = Z_{con}(x),\quad P^{u^{*}} - a.s.,
\end{equation}
where $P^{u^{*}}$ is the path measure induced by the controlled process $X^{u^{*}}$ in \eqref{eq:controlled} associated with the optimal control $u^{*}$, and $P$ is the path measure induced by the original process $X$ in \eqref{eq:uncontrolled}. Therefore, it is reasonable to find the optimal control variate by solving the following stochastic optimal control problem:
\begin{equation}\label{eq:ju}
    \min_{u \in \mathcal{U}} J(u) = \mathbb{E}_x\left[-\int_0^T h(s,X^u_s) ds + \frac{1}{2}\int_0^T |u(s,X_s^u)|^2 ds - \log g(X_T^u)\right]. 
\end{equation}
In practice, one may parameterize $u(t,x)$ by $u = u(\theta;t,x)$ and iteratively solve the optimization problem. In detail, given a suitable loss function $L(u)$ that depends on the control $u$ and the trajectory $X^u_{[0,T]}$ (e.g., $L = J$), at each iteration one implements the following:
\begin{enumerate}
    \item With the current control $u(t,x) = u(\theta;t,x)$ simulate $N$ realizations of the controlled SDE $X^u_{[0,T]}$, and calculate the loss $L$ and its derivative $\nabla_\theta L$ using the $N$ realizations of $X^u_{[0,T]}$.
    \item Update $\theta$ using $\nabla_\theta L$ via some suitable method, for instance, the stochastic gradient descent.
\end{enumerate}
As mentioned in Section \ref{sec:intro}, such an iterative framework is exactly the continuous-time policy gradient (PG) method \cite{williams1992simple}, and it is also called the iterative diffusion optimization (IDO) in some other literature \cite{richter2021solving}. As a remark, $J(u)$ of the form \eqref{eq:ju} offers a good choice of the loss $L$, which equals $\DKL(P^u\|P^{u^{*}})$ up to a constant. Moreover, others have also studied other forms of loss to approximate the optimal control $u^{*}$; for instance, in the so-called cross-entropy method \cite{zhang2014applications,hartmann2016model}, they use the loss $\DKL(P^{u^{*}}\|P^u)$ instead. More discussion on the choice of loss function and detailed derivations would be given in the next subsection for the TPPF algorithm, where we also treat the KL-divergence between path measures as the loss function.

\subsection{Approximating the optimal twisting function in discrete-time setting}
Now we propose a novel framework to approximate the optimal twisting function $\varphi^{*}$ defined in \eqref{eq:recursive}, guided by the existing method for training $u^{*}$ in the continuous-time setting. Different from the iteration method proposed in \cite{heng2020controlled,guarniero2017iterated,lawson2018twisted} (which is similar to the temporal difference (TD) learning in the reinforcement learning community \cite{sutton1988learning,doya1995temporal,mou2024bellman,zhou2024solving}) based on the recursive formula in \eqref{eq:recursive}, our method learns the optimal twisting function along the whole path via a neural network. Hence, we name the proposed algorithm the ``Twisted-Path Particle Filter'' (TPPF).

In what follows, we first derive a similar variational principle and some relations between different choices of loss induced by the twisting function in the discrete-time setting. After that, we will give our detailed algorithms corresponding to possible choices of loss function. Some detailed formulas and implementation details will also be discussed at the end of this section.

First, note that the Donsker-Varadhan variational principle can also be applied to the discrete-time model in Section \ref{sec:dissetting}. In fact, after a change of measure, we have
\begin{equation}
    -\log Z_{dis}(x) = \inf_{\varphi} J(\varphi),
\end{equation}
where $\varphi = \left(\varphi_1,\dots,\varphi_n \right)$, $\varphi_i(\cdot)$ ($1 \leq i \leq n$) are positive, continuous functions in $\mathbb{R}^d$, and as a direct result of the dual relation, by choosing $W(\hat{X}) = -\sum_{k=0}^n\log \hat{g}_k(\hat{X}_k)$ for some discrete Markov chain $(\hat{X}_k)_{k=0}^n$
\begin{equation}\label{eq:Jvarphi}
    J(\varphi) := \mathbb{E}_x\left[-\sum_{k=0}^n\log \hat{g}_k(\hat{X}_k^\varphi)\right] + \DKL(P^{\varphi}\|P^1),
\end{equation}
where $P^{\varphi}$ is the path measure induced by the twisting function $\varphi$ as in \eqref{eq:recursive}, $P^1$ is the original path measure (or equivalently, with twisting function being $1$), and $\hat{X}^\varphi$ is the twisted Markov chain associated with $P^\varphi$. More precisely, under $P^{\varphi}$, the Markov chain $X^{\varphi}$ evolves under the twisted transition density
\begin{equation}
    \hat{P}^{\varphi}_k(x_{k-1},x_k) = \frac{\varphi(k,x_k)\hat{P}(x_{k-1},x_k)}{\hat{P}[\varphi](k,x_{k-1})},\quad 1 \leq k \leq n,
\end{equation}
with the normalizing constant $\hat{P}[\varphi](k,x_{k-1})$ defined by
\begin{equation*}
    \hat{P}[\varphi](k,x_{k-1}) = \int \varphi(k,y)\hat{P}(x_{k-1},y)dy,
\end{equation*}
and under $P$, the Markov chain $X$ evolves under the untwisted transition density $\hat{P}(x_{k-1},x_k)$. Consequently, the twisted path measure associated with the twisting function $\varphi$ has the following explicit expression:
\begin{equation}\label{eq:Pvarphi}
    P^\varphi(dx_{1:n}) = \prod_{k=1}^n \frac{\varphi(k,x_k)\hat{P}(x_{k-1},x_k)}{\hat{P}[\varphi](k,x_{k-1})} dx_{1:n}.
\end{equation}

Similarly to the continuous-time setting, it can be verified that the optimal twisting function $\varphi^{*}$ with the zero-variance property is the minimizer of the functional $J(\varphi)$ above, and minimizing the functional $J(\varphi)$ is equivalent to minimizing the KL-divergence $\DKL(P^{\varphi}\|P^{\varphi^{*}})$. We summarize this result in the proposition below:
\begin{proposition}\label{eq:jvarphiexpression}
Consider the functional $J(\varphi)$ defined in \eqref{eq:Jvarphi} and the optimal twisting function $\varphi^{*}$ defined in \eqref{eq:recursive}. Then it holds
\begin{equation}
    J(\varphi) = J(\varphi^{*}) + \DKL(P^{\varphi}\|P^{\varphi^{*}}).
\end{equation}
Moreover, one can derive the following explicit formula for $J(\varphi)$ and $\DKL(P^{\varphi}\|P^{\varphi^{*}})$:
\begin{equation}
    J(\varphi) = \mathbb{E}_x\left[-\sum_{k=0}^n \log \hat{g}_k(\hat{X}_k^{\varphi}) + \sum_{k=1}^n \log \frac{\varphi(k,\hat{X}_k^{\varphi})}{ \hat{P}[\varphi](k,\hat{X}^{\varphi}_{k-1})}\right]
\end{equation}
and
\begin{equation}
    \DKL(P^{\varphi}\|P^{\varphi^{*}}) = \mathbb{E}_x\left[-\sum_{k=0}^n \log \hat{g}_k(\hat{X}_k^{\varphi}) + \sum_{k=1}^n \log \frac{\varphi(k,\hat{X}_k^{\varphi})}{ \hat{P}[\varphi](k,\hat{X}^{\varphi}_{k-1})}\right] + \log \varphi^{*}(0,x)
\end{equation}
\end{proposition}

\begin{remark}
After a change of measure, it is easy to see that $J(\varphi)$ or $\DKL(P^{\varphi}\|P^{\varphi^{*}})$ has an alternative expression:
\begin{multline}\label{eq:untwisted}
    \DKL(P^{\varphi}\|P^{\varphi^{*}}) = \mathbb{E}_x\Big[\exp\Big(\sum_{k=1}^n\log \frac{\varphi(k,\hat{X}_k)}{\hat{P}[\varphi](k,\hat{X}_{k-1})} \Big)\\
    \Big(-\sum_{k=0}^n \log \hat{g}_k(\hat{X}_k) + \sum_{k=1}^n \log \frac{\varphi(k,\hat{X}_k)}{\hat{P}[\varphi](k,\hat{X}_{k-1})}\Big)\Big] + \log \varphi^{*}(0,x).
\end{multline}
This expression is particularly useful in cases where sampling from the original transition $\hat{P}(x_{k-1},x_k)$ is much easier than sampling from the twisted transition $\hat{P}_k^{\varphi}(x_{k-1},x_k)$. For instance, if $\varphi$ is parameterized by some neural network, basic sampling methods like rejection sampling (proposed in \cite{bon2022monte} for the twisted particle filter) would not be so efficient, suffering from a low, unstable acceptance rate. Moreover, it is natural to doubt whether the Monte Carlo approximation for the loss (and its gradient) would deteriorate after a change of measure. So far we have no theoretical guarantee for this, but empirically, the experiments show that \eqref{eq:untwisted} can approximate the loss well. 
\end{remark}

Guided by various variational-based methods in continuous-time setting (discussed in Section \ref{sec:motivation}), we can also consider different loss functions other than $J(\varphi)$ or $\DKL(P^{\varphi}\|P^{\varphi^{*}})$. Before seeking blindly for other possible choices, let us first study the relationship between the relative variance and KL-divergence, since our final goal of approximating the optimal twisting function is just to reduce the variance. In fact, using a generalized Jensen's inequality, we are able to derive the following:
\begin{proposition}\label{prop:KLcontrolinequality}
Given some function $W(\cdot): \Omega \rightarrow \mathbb{R}$ and probability measures $P$, $Q$ on $\Omega$ which are absolutely continuous with each other, define
\begin{equation*}
    Z := \mathbb{E}^{X \sim P}\left[e^{-W(X)}\right] = \mathbb{E}^{\tilde{X}\sim Q}\left[e^{-W(\tilde{X})}\frac{dP}{dQ}(\tilde{X})\right]
\end{equation*}
and the relative variance with respect to $Q$ is defined by
\begin{equation}\label{eq:Relvar}
    r(Q) :=\frac{\sqrt{Var_{Q}\left(e^{-W}\frac{dP}{dQ} \right)}}{Z}
\end{equation}
Suppose there is an optimal probability measure $Q^{*}$ with the zero-variance property
\begin{equation}
    \frac{dQ^{*}}{dP} = \frac{e^{-W}}{Z} \quad P-a.s.,
\end{equation}
Then the following estimates hold:
\begin{enumerate}
    \item $r^2(Q) \geq e^{\DKL(Q^{*}\|Q)}-1$
    \item If the constants $m:=\inf_E \frac{Q^{*}(E)}{Q(E)}$, $M = \sup_E \frac{Q^{*}(E)}{Q(E)}$ exist, then
    \begin{equation}
        e^{m\DKL(Q \| Q^{*}) + \DKL(Q^{*} \| Q)}-1 \leq r^2(Q) \leq e^{M\DKL(Q \| Q^{*}) + \DKL(Q^{*} \| Q)}-1
    \end{equation}
\end{enumerate}
\end{proposition}

Consider the discrete-time model with $(\hat{P}(\cdot,\cdot); \hat{g}_{0:n}(\cdot);x)$, we have the following direct corollary of Proposition \ref{prop:KLcontrolinequality}.
\begin{corollary}\label{coro:coro}
Take $\Omega = \mathbb{R}^n$, $W(\hat{X}) = -\sum_{k=0}^n\log \hat{g}_k(\hat{X}_k)$, $P = P^1$, $Q = P^\varphi$, and $Q^{*} = P^{\varphi^{*}}$ (recall the definition of $P^\varphi$ in \eqref{eq:Pvarphi}) in Proposition \ref{prop:KLcontrolinequality}. Then there exists $0< m \leq M$ depending on $\varphi$, $\varphi^{*}$ such that
\begin{equation}
        e^{m\DKL(P^\varphi \| P^{\varphi^{*}}) + \DKL(P^{\varphi^{*}} \| P^\varphi)}-1 \leq r^2(P^\varphi) \leq e^{M\DKL(P^\varphi \| P^{\varphi^{*}}) + \DKL(P^{\varphi^{*}} \| P^\varphi)}-1.
    \end{equation}
\end{corollary}

Motivated by Proposition \ref{prop:KLcontrolinequality} and Corollary \ref{coro:coro} above, it is then reasonable to consider the loss function of the form 
\begin{equation*}
    a\DKL(P^\varphi \| P^{\varphi^{*}}) + \DKL( P^{\varphi^{*}} \| P^\varphi)
\end{equation*}
for some positive $a$. Moreover, when $P^\varphi$ approximates the target $P^{\varphi^{*}}$ well, $m$, $M$ in Proposition \ref{prop:KLcontrolinequality} are approximately $1$, so it is reasonable to choose $a=1$, namely, treat 
\begin{equation*}
    \DKL(P^\varphi\| P^{\varphi^{*}}) + \DKL(P^{\varphi^{*}} \| P^\varphi)
\end{equation*}
as a loss function. Of course with these bounds, it is also reasonable to consider the loss
\begin{equation*}
    \DKL(P^{\varphi^{*}} \| P^\varphi),
\end{equation*}
which corresponds to the loss in the well-studied cross-entropy method in the continuous-time setting \cite{zhang2014applications} (and this is the reason why we denote it by $L_{CE}$). When implementing the TPPF algorithm, we need explicit expressions for the loss functions chosen above, so that we can calculate them via Monte Carlo approximation. In fact, denoting
\begin{equation}\label{eq:denoteL}
L_{RE}:=\DKL(P^\varphi \| P^{\varphi^{*}}),\quad L_{CE} := \DKL(P^{\varphi^{*}} \| P^\varphi),\quad L_{RECE} = L_{RE}+L_{CE}.
\end{equation}
Simple calculations yield
\begin{equation}\label{eq:LRE}
\begin{aligned}
    L_{RE} 
    & =\mathbb{E}_x\left[-\sum_{k=0}^n \log \hat{g}_k(\hat{X}_k^{\varphi}) + \sum_{k=1}^n \log \frac{\varphi(k,\hat{X}_k^{\varphi})}{\hat{P}[\varphi](k,\hat{X}_{k-1}^{\varphi})}\right] + \log \varphi^{*}(0,x)\\
    &= \mathbb{E}_x\left[\exp\left(\sum_{k=1}^n\log \frac{\varphi(k,\hat{X}_k)}{\hat{P}[\varphi](k,\hat{X}_{k-1})} \right)\left(-\sum_{k=0}^n \log \hat{g}_k(\hat{X}_k) + \sum_{k=1}^n \log \frac{\varphi(k,\hat{X}_k)}{\hat{P}[\varphi](k,\hat{X}_{k-1})}\right)\right] \\
    & \qquad + \log \varphi^{*}(0,x) ,  
\end{aligned}
\end{equation}
where we can use either the first line (via the twisted Markov chain $\hat{X}^\varphi$) or the second line (via the untwisted Markov chain $\hat{X}$) in the training. See Section \ref{sec:num} for more details. Also, for $L_{CE}$, we have
\begin{equation}\label{eq:LCE}
\begin{aligned}
    L_{CE}  &= \frac{1}{\varphi^{*}(0,x)}\mathbb{E}_x\left[\exp\left(\sum_{k=0}^n \log \hat{g}_k(\hat{X}_k) \right) \left(\sum_{k=1}^n \log \hat{g}_k(\hat{X}_k) - \log \varphi^{*}(0,x) \right)\right]  (=\text{constant})\\
    &\quad - \frac{1}{\varphi^{*}(0,x)}\mathbb{E}_x\left[\exp\left(\sum_{k=0}^n \log \hat{g}_k(\hat{X}_k) \right) \left(\sum_{k=1}^n \log \frac{\varphi(k,\hat{X}_k)} {\hat{P}[\varphi](k,\hat{X}_{k-1})} \right)\right].
\end{aligned}
\end{equation}
While it seems that the objective function contains $\varphi^{*}(0,x)$ (besides the constant term, which does not impact the optimization), this factor is of course not known in practice, as $\varphi^{*}(0,x)$ is in fact the target $Z_{dis}(x)$. However, since $\varphi^{*}(0,x)$ is positive by definition, one can instead optimize the factor following $1 / \varphi^*(0, x)$.

\begin{remark}[possible numerical instability of $L_{CE}$]\label{rmk:instability}
Note that the coefficient 
$$\exp\left(\sum_{k=0}^n \log \hat{g}_k(\hat{X}_k) \right) / \varphi^{*}(0,x)$$
in the second line of \eqref{eq:LCE} may bring numerical instability to $L_{CE}$-training.
This coefficient above is determined by $\hat{g}_k$ and the (random) position of $\hat{X}_k$, so it is numerically unstable especially when the length of the Markov chain $n$ is large. This seems to explain some of the instability behavior we observe empirically in numerical examples, see details in Section \ref{sec:num}. 
\end{remark}



Given the loss functions above, to approximate the optimal twisting function $\varphi^{*}$ in practice, it remains to parameterize $\varphi$ and learn the parameters according to the loss functions. In our algorithm, we parameterize the twisting function by a neural network $\varphi(k,x) = \varphi(\theta;k,x)$, where $\theta$ denotes all the parameters of the network, and $k$, $x$ are the inputs. For numerical stability, it is better to treat $\log \varphi(\theta;k,x)=NN(\theta;k,x)$ as the output so that $\varphi(k,x) = \exp(NN(\theta;k,x))$. We refer to Section \ref{sec:num} for more details. We summarize the proposed TPPF in Algorithm \ref{alg:tppf} below:
\begin{algorithm}
	\renewcommand{\algorithmicrequire}{\textbf{Input:}}
	\renewcommand{\algorithmicensure}{\textbf{Output:}}
	\caption{Twisted-Path Particle Filter (TPPF)}
	\label{alg:tppf}
	\begin{algorithmic}
            \STATE Given discrete-time model with $(\hat{P}(\cdot,\cdot), (\hat{g}_k(\cdot))_k, x)$. Parameterize the twisting function via neural network: $\log \varphi(\theta;k,x)=NN(\theta;k,x)$. Choose the loss function $L = L_{RE}$ or $L_{CE}$ or $L_{RECE}$.
            \STATE 1. \textbf{Learn the twisting function.} At each iteration for updating $\theta$,
            \STATE \quad (1) Simulate $N$ independent realizations of the twisted Markov chain $\hat{X}_{0:n}^\varphi$ with transition kernel $\hat{P}^\varphi$ (or alternatively, simulate the untwisted Markov chain $\hat{X}_{0:n}$ with transition kernel $\hat{P}$).
            \STATE \quad (2) Calculate the loss $L$ and its gradient $\nabla_{\theta}L$ according to \eqref{eq:denoteL} - \eqref{eq:gradient} via Monte Carlo approximation using the $N$ samples obtained in (1).
            \STATE \quad (3) Update the parameters $\theta$ using suitable optimization methods (for instance, $\theta \leftarrow \theta - \eta \nabla_{\theta} L$).
            \STATE 2. \textbf{Run the twisted particle filter} (Algorithm \ref{alg:tpf}) using the learned twisting function $\varphi(\theta;k,x)$
	\end{algorithmic}  
\end{algorithm}

\begin{remark}[computing the gradient]
As a final remark, recall that when implementing Algorithm \ref{alg:tppf}, we need to calculate the gradient $\nabla_\theta L$ via Monte Carlo approximation. Note that when the loss function $L$ is approximated via the untwisted Markov Chain $\hat{X}$, the gradient can be calculated via auto-differentiation. However, if the loss $L$ is approximated via the twisted Markov Chain $\hat{X}^\varphi$ (see the first line in \eqref{eq:LRE}), the computation of $\nabla_\theta L$ is not that direct, since $\hat{X}^\varphi$ depends on the parameter $\theta$. Here, we provide the explicit formula for $\partial_{\theta_i} L_{RE}$, where $L_{RE}$ is given in the first line of \eqref{eq:LRE}. In fact, we first write $L_{RE}$ into the form of the second line in \eqref{eq:LRE}, where $\hat{X}$ in it is independent of $\theta$. Then, after a change of measure, we can write the gradient into the following:
\begin{equation}\label{eq:gradient}
\partial_{\theta_i}L_{RE} 
    =\mathbb{E}_x\left[\left(1-\sum_{k=1}^n \log \hat{g}_k(\hat{X}_k^{\varphi}) + \sum_{k=1}^n \log \frac{\varphi(k,\hat{X}_k^{\varphi})}{\hat{P}[\varphi](k,\hat{X}_{k-1}^{\varphi})}\right)\left(\sum_{k=1}^n \partial_{\theta_i}\log \frac{\varphi(k,\hat{X}_k^{\varphi})}{\hat{P}[\varphi](k,\hat{X}_{k-1}^{\varphi})} \right)\right]   .
\end{equation}
For a further implementation detail, to obtain $\partial_{\theta_i}L_{RE}$ in \eqref{eq:gradient}, there is no need to calculate $\partial_{\theta_i}\varphi$. Instead, we can first calculate the value of
\begin{equation*}
    \mathbb{E}_x\left[\left(1-\sum_{k=1}^n \log \hat{g}_k(\hat{X}_k^{\varphi}) + \sum_{k=1}^n \log \frac{\varphi(k,\hat{X}_k^{\varphi})}{\hat{P}[\varphi](k,\hat{X}_{k-1}^{\varphi})}\right)\left(\sum_{k=1}^n \log \frac{\varphi(k,\hat{X}_k^{\varphi})}{\hat{P}[\varphi](k,\hat{X}_{k-1}^{\varphi})} \right)\right]   
\end{equation*}
using Monte Carlo approximation, and then ``detach'' the first term above so that it would contain no gradient information. Consequently, auto-differentiation via (one-time) back-propagation gives the desired value of the gradient in \eqref{eq:gradient}. 
\end{remark}

\section{Numerical examples}\label{sec:num}
In this section, we test the proposed TPPF algorithm with loss functions $L_{RE}$, $L_{CE}$, or $L_{RECE}$ on different models including the linear Gaussian model, the NGM-78 model, and the Lorenz-96 model. We compare our algorithm with well-known competitors including the bootstrap particle filter (BPF) \cite{gordon1993novel}, the iterated auxiliary particle filter (iAPF) \cite{guarniero2017iterated}, and the fully-adapted auxiliary particle filter (FA-APF) \cite{pitt1999filtering}. 
We observe that TPPF is more robust and can beat its competitors in most examples, although it does not perform as well as iAPF in some linear models, mainly due to the linear structure of iAPF.
We provide some implementation details in the Appendix, including details for parameterization and calculation of the normalizing constant $\hat{P}[\varphi](k,x)$, as well as some discussion on the fairness of the comparison.\footnote{The source code in this paper can be found at \url{https://github.com/superwyl666/TPPF}}

\subsection{Linear Gaussian model}\label{sec:LGM}
As a first example, we consider the linear Gaussian model, which is also the discretization of an OU process $X_t$ with linear Gaussian observations $Y_k \sim N(\cdot;B_k X_k,\Sigma_{OB})$
\begin{equation*}
    X_{k+1} = X_k + \Delta t A X_k  + \sqrt{\Delta t} N(0,\Sigma). 
\end{equation*}
We choose $\Delta t = 0.01$, $T = 0.5$ (so the length of the discrete Markov chain is $n = T/\Delta t = 50$), $B_k = I_d$, $A = -I_d$, $\Sigma = I_d$, $\Sigma_{OB} = I_d$.

We consider $d \in \{2,5,15,20 \}$. The mean $\mu_k$ and the variance $\sigma^2_k$ are parameterized by two independent 2-layer DenseNets, each with width 10. We optimize the two networks together using ADAM with learning rate 0.001. The particle number is set to be 200. The boxplots in Figure \ref{fig:boxplotlg} compare the TPPF (with $L_{RE}$, $L_{CE}$, or $L_{RECE}$) with competitors including BPF, iAPF and FA-APF using 1000 replicates. The red cross represents the mean and the red dashed line represents the median. We also report the empirical standard deviations in Table \ref{table:gaussSD}, and in Table \ref{table:combined} (a) we report the average relative effective sample size (ESS-r) defined by $ESS(W_{1:N}) = 1 / (N\sum_{i}^N W_i^2)$ with $\sum_{i=1}^N W_i = 1$. Furthermore, recall the definition of the relative variance (which is the quantity we aim to reduce)
\begin{equation*}
    r(Q) =\frac{\sqrt{Var_{Q}\left(e^{-W}\frac{dP}{dQ} \right)}}{Z} = \sqrt{\left(\frac{\mathbb{E}_Q|e^{-W} \frac{dP}{dQ}|^2}{|\mathbb{E}_Q[e^{-W} \frac{dP}{dQ}]|^2} \right) - 1}.
\end{equation*}
We compute the relative variance of each $Q$ using Monte Carlo approximation with a relatively large sample size, and the result is reported in Table \ref{table:combined} (b). We remark here that as we observe in our experiments, the relative variance is a criterion more sensitive to the nonlinearity of the model. Also, the experiments for the relative variance adopt a smaller value of $T$ to avoid blow-up of data, because compared with the quantity $Z$ ($=\mathbb{E}_P [e^{-W}]$), the computation of $r(Q)$ defined above requires its second moment, where numerical instability is more likely to occur. In particular, we choose $T = 0.1$ in the linear Gaussian model when computing the relative variance.

Clearly, TPPF with $L_{RE}$ and $L_{RECE}$ can defeat BPF and FA-APF, especially when the dimension is high, and TPPF with $L_{CE}$ suffers from the curse of dimensionality due to the numerical instability discussed in Remark~\ref{rmk:instability}. Moreover, the iAPF performs the best in this linear Gaussian model, because iAPF is learning the twisting function in a Gaussian function class, each time solving a standard restricted least square minimization problem. Therefore, our training-based method cannot perform as well as iAPF in this experiment. However, as we will see in the other experiments when the optimal twisting function is not in the Gaussian family, our algorithm performs better than iAPF.
\begin{figure}[htbp]
\centering
\includegraphics[width=6cm]{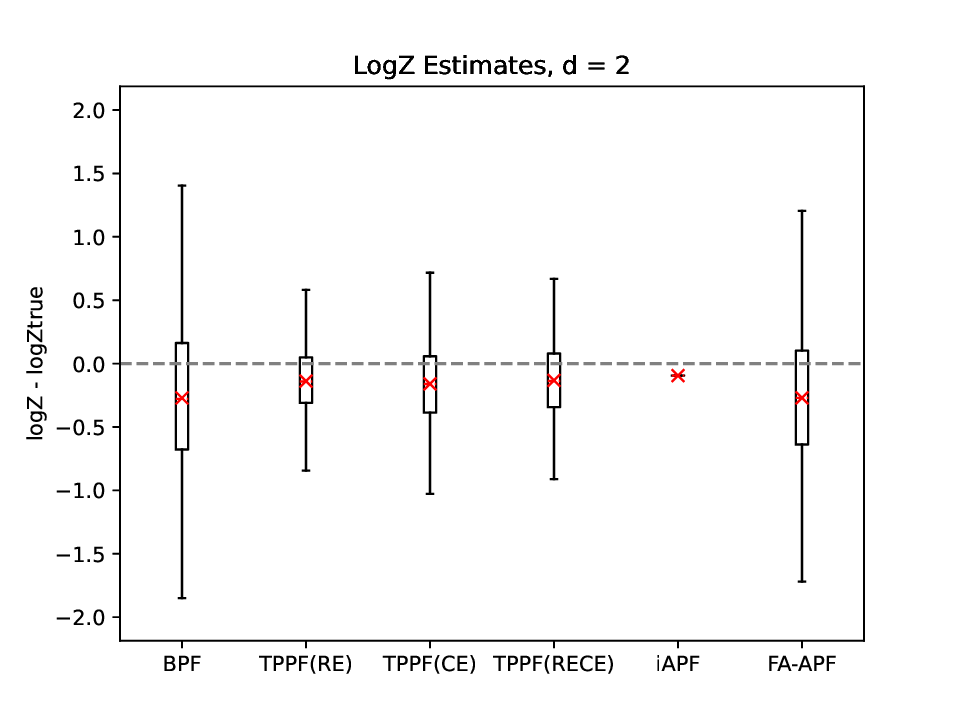}
\quad
\includegraphics[width=6cm]{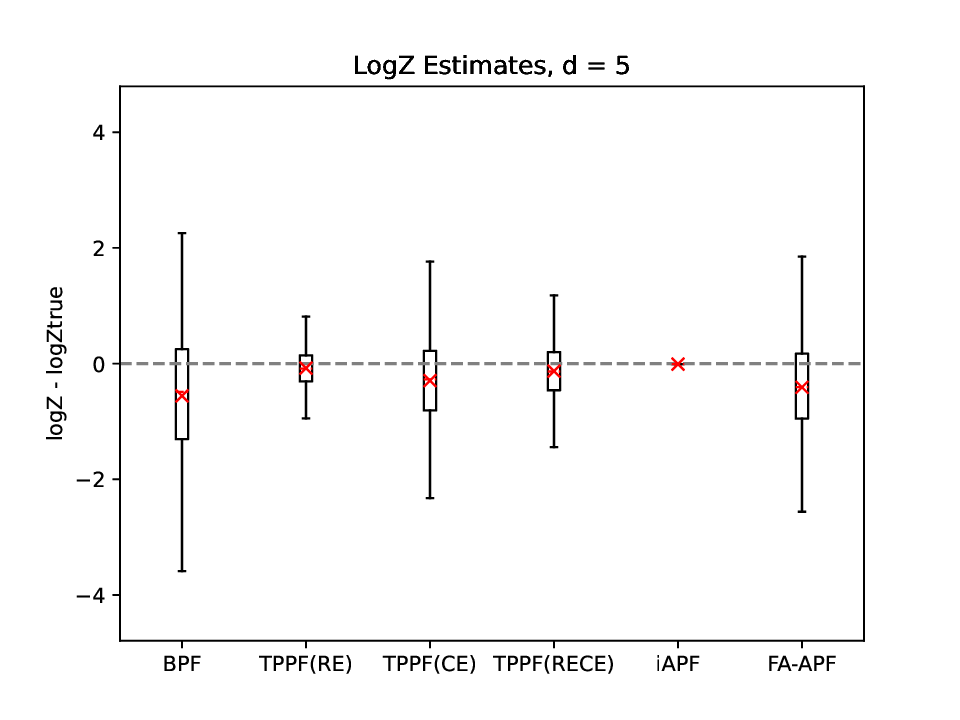}
\quad
\includegraphics[width=6cm]{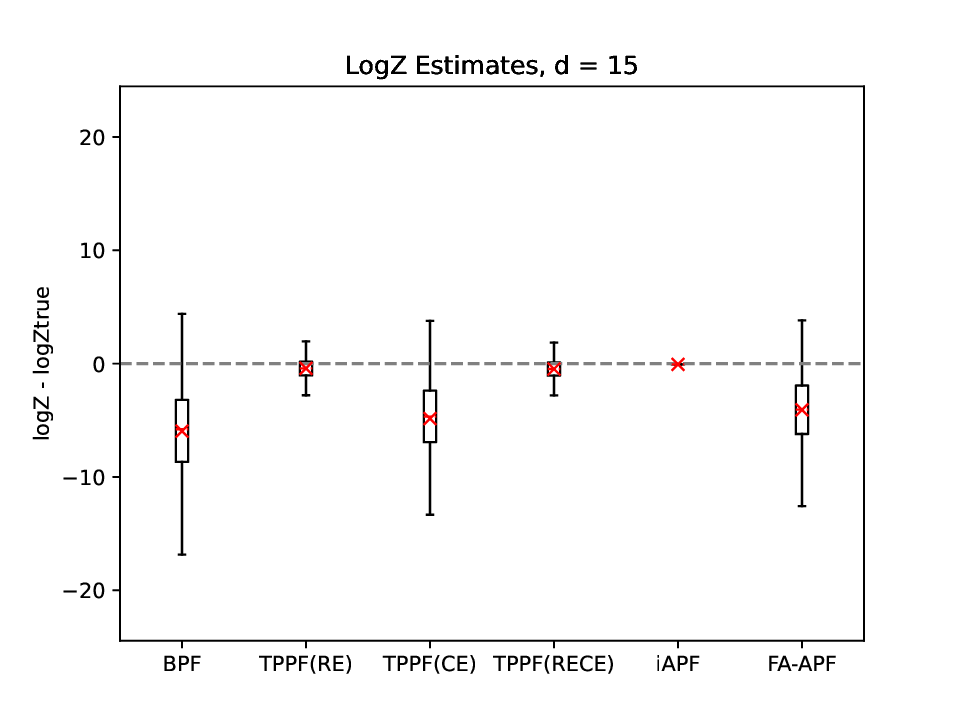}
\quad
\includegraphics[width=6cm]{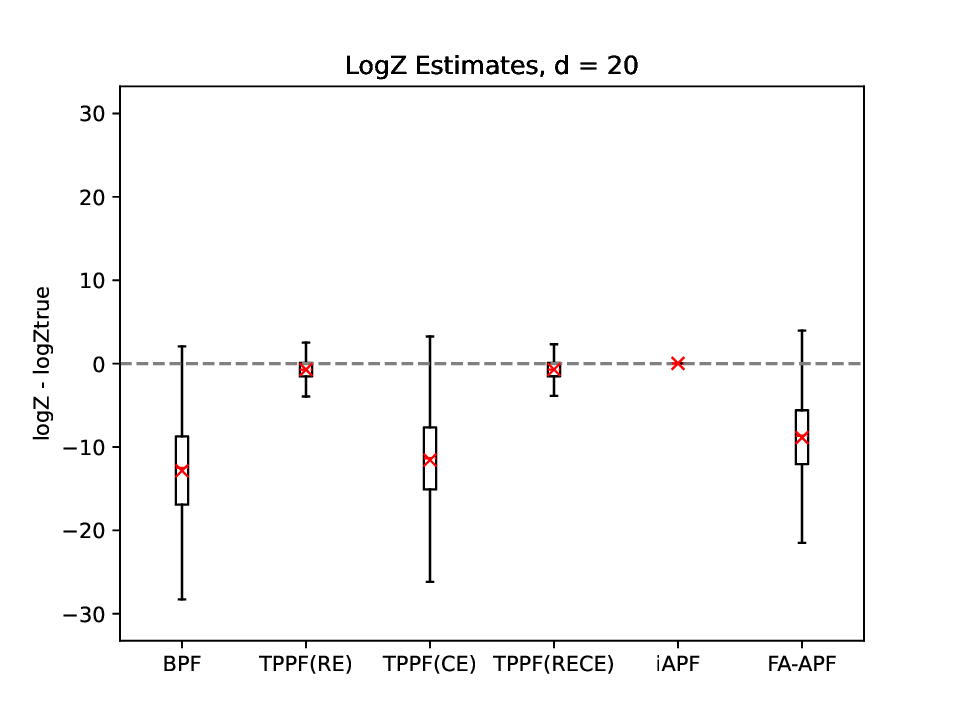}
\caption{\textbf{Linear Gaussian model:} compare TPPF (trained with $L_{RE}$, $L_{CE}$, or $L_{RECE}$) and its competitors (BPF, iAPF and FA-APF). Boxplot for $\log Z$ using 1000 replicates, with configurations $d \in \{2,5,15,20 \}$. The red cross represents the mean and the red dashed line represents the median. }
\label{fig:boxplotlg}
\end{figure}


\begin{table}[H]
    \centering
\begin{tabular}{ccccc}
\toprule  
 &d=2 & d=5  &  d=15  & d=20\\ 
\midrule 
BPF &0.60 & 1.14 & 4.05 & 5.95\\
TPPF(RE) &0.27 & 0.35 & 0.90 & 1.23\\
TPPF(CE) &0.34 & 0.77 & 3.51 &5.70\\
TPPF(RECE) &0.31 & 0.50 & 0.86 &1.21\\
FA-APF &0.61 & 0.87 & 3.13 &4.85\\
iAPF &6.11e-13 & 2.54e-14 & 4.47e-14 &9.43e-14\\
\bottomrule     
\end{tabular}
    \caption{\textbf{Linear Gaussian model:} compare TPPF (trained with $L_{RE}$, $L_{CE}$, or $L_{RECE}$) and its competitors (BPF, iAPF and FA-APF). Empirical standard deviation of $\log Z$ with 1000 replicates for $d \in \{2,5,15,20 \}$.}
    \label{table:gaussSD}
\end{table}


\begin{table}[H]
\centering
\begin{subtable}{0.60\linewidth}
\centering
\begin{tabular}{ccccc}
\hline
 & d=2 & d=5  & d=15 & d=20\\
\hline
BPF & 88.93\% & 73.93\% & 48.18\% & 41.38\%\\
TPPF(RE) & 91.42\% & 80.55\% & 55.87\% & 47.10\%\\
TPPF(CE) & 89.84\% & 74.64\% & 48.45\% & 41.59\%\\
TPPF(RECE) & 90.85\% & 79.11\% & 55.91\% & 47.09\%\\
FA-APF & 90.10\% & 76.29\% & 51.22\% & 44.24\%\\
iAPF & 100\% & 100\% & 100\% & 100\%\\
\hline
\end{tabular}
\caption{ESS}
\end{subtable}\hfill
\begin{subtable}{0.60\linewidth}
\centering
\begin{tabular}{ccccc}
\hline
  &d=2 & d=5 & d=15 &d=20  \\
\hline
BPF&2.93& 18.45 &6.52e3 &8.40e4\\
TPPF(RE)&2.43& 11.78& 1.48e3&1.62e4\\
TPPF(CE)&1.41& 4.89&0.14e3 &0.27e4\\
TPPF(RECE)&2.10& 7.22&7.10e3 &8.58e4\\
FA-APF&1.40& 5.19&1.60e3 &2.96e4\\
iAPF&4.59e-8& 7.60e-8&4.34e-8 &8.75e-8\\
\hline
\end{tabular}
\caption{Relative variance}
\end{subtable}
\caption{\textbf{Linear Gaussian model:} compare TPPF (trained with $L_{RE}$, $L_{CE}$, or $L_{RECE}$) and its competitors (BPF, iAPF and FA-APF). (a): Averaged relative ESS with 1000 replicates for $d \in \{2,5,15,20 \}$. (b): Relative variance computed using Monte Carlo approximation with $10^6$ samples for $d \in \{2,5,15,20 \}$.}
\label{table:combined}
\end{table}


\subsection{NGM-78 model}
In contrast to the linear nature of the last model, here we consider an (artificial) nonlinear model frequently used when testing the performance of particle filters \cite{netto1978optimal,gordon1993novel,kitagawa1996monte,west1993mixture}. To our knowledge, this model was first used by Netto, Gimeno, and Mendes in 1978 \cite{netto1978optimal}, so here we name it NGM-78 model for simplicity. The NGM-78 model describes the following discrete-time Markov chain in $\mathbb{R}^d$:

\begin{equation*}
    X_n = a_0 X_{n-1} + a_1 \frac{X_{n-1}}{(1 + |X_{n-1}|^2)} + f(n) + v_n,
\end{equation*}
and the observations
\begin{equation*}
    Y_n = a_2  |X_n|^2 + u_n,
\end{equation*}
where $v_n \sim N(0,\sigma^2_v I_d)$ and $u_n \sim N(0, \sigma^2_u I_d)$. In our experiments, we choose $a_0 = \frac{1}{2}$, $a_1 = 25$, $a_2 = \frac{1}{20}$, $f \equiv 0$, $\sigma_u^2 = 1$, $\sigma_v^2 = 0.01$, and $n = 0,1,\dots, 50$ ($n = 0,1,\dots, 10$ when computing the relative variance). The neural network structure is chosen to be a 2-layer network with each width 10. The learning rate is set to be 0.01, the optimization method is chosen to be ADAM, and the particle number is set to be 200. We test our TPPF algorithm on the NGM-78 model with the configuration $d \in \{1,2,5,10,15,20 \}$ and compare TPPF with its competitors. The empirical standard deviation of $\log Z$ is reported in Table \ref{tab:NGM}, and the boxplot is reported in Figure \ref{fig:NGM}. The relative variance is reported in Table \ref{tab:NGMrv}. As we can see, in contrast to the linear Gaussian model, in the current nonlinear settings, the linear structure (twisted functions learned in some Gaussian function family) of iAPF leads to its relatively worse behavior, which is clearer if we observe the relative variance reported in Table \ref{tab:NGMrv}. On the other hand, our TPPF algorithms behave better than iAPF partly due to the stronger expressive ability of neural networks. Moreover, in this model the term $\hat{P}[g](k,x)$ (recall the definition in \eqref{eq:twistedtransition}) used in FA-APF is calculated via Monte Carlo approximation since there is no analytical solution for it. Consequently, we observe in Table \ref{tab:NGM} that although the FA-APF outperforms other algorithms when $d=1$, it suffers from the curse of dimensionality and we cannot obtain reasonable estimates with the FA-APF in a feasible computational time.


\begin{table}[H]
    \centering
\begin{tabular}{cccccc}
\toprule  
 &d=1 & d=2  &  d=5  &d=15 &d=20 \\ 
\midrule 
BPF & 0.0243& 0.119 & 0.156 &0.252 &0.295 \\
TPPF(RE) &0.0184 & 0.0516 &  0.103 & 0.208 &0.246\\
TPPF(CE) &0.0220 & 0.0982 &  0.178&0.183&0.273\\
TPPF(RECE) & 0.0283& 0.0482 & 0.114 &0.187&0.280\\
FA-APF &0.00380 & 0.0718 & 0.162 &-&-\\
iAPF &0.0372 & 0.115 &  0.244&0.335&0.353\\
\bottomrule     
\end{tabular}
    \caption{\textbf{NGM-78 model:} compare TPPF (trained with $L_{RE}$, $L_{CE}$, or $L_{RECE}$) and its competitors (BPF, iAPF and FA-APF). Empirical \textbf{standard deviation} of $\log Z$ with 20 replicates for $d \in \{1,2,5,15,20 \}$. For $d\geq 15$, we cannot obtain reasonable estimates with the FA-APF in a feasible computational time.}
    \label{tab:NGM}
\end{table}

\begin{figure}[htbp]
    \centering
    \begin{subfigure}{0.45\textwidth}
        \centering
        \includegraphics[width=\textwidth]{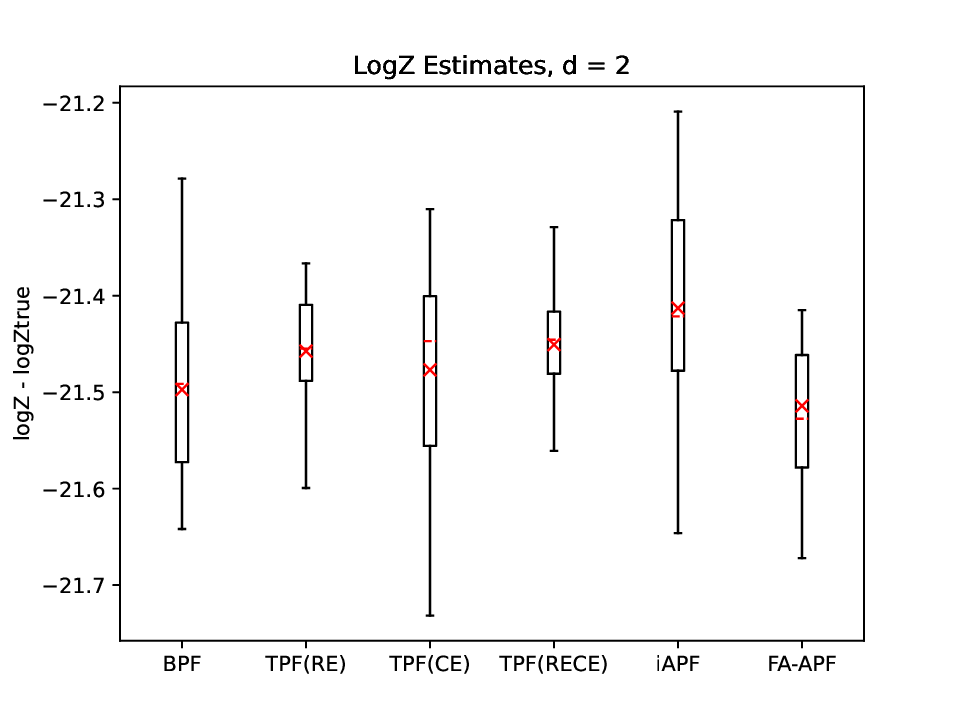}
        \caption{ }
        \label{fig:NGM2d}
    \end{subfigure}
    \hfill
    \begin{subfigure}{0.45\textwidth}
        \centering
        \includegraphics[width=\textwidth]{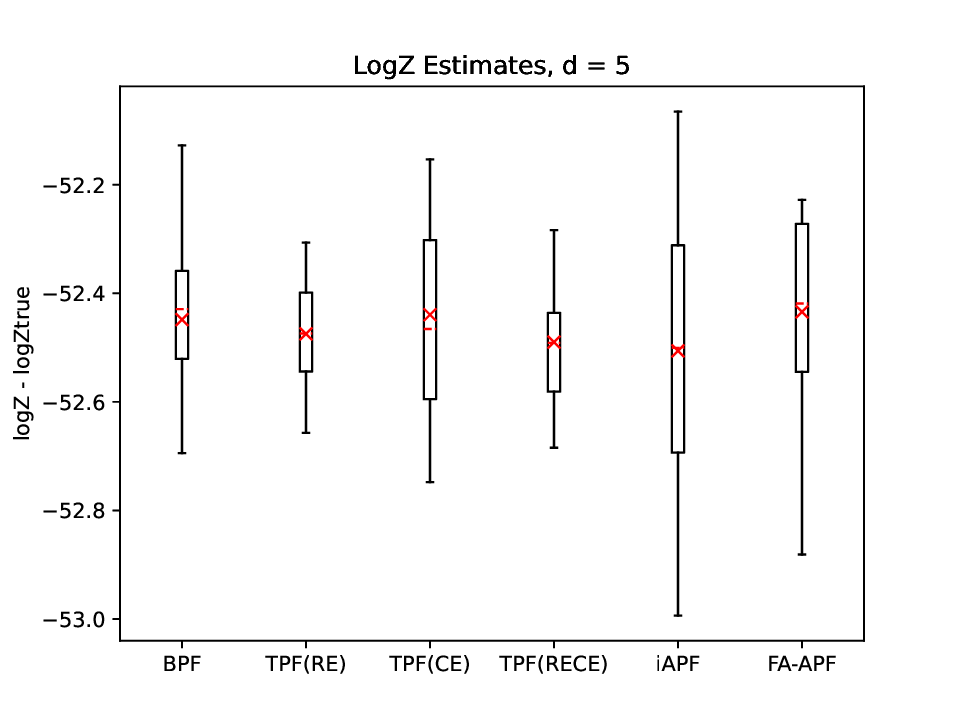}
        \caption{ }
        \label{fig:NGM5d}
    \end{subfigure}
    \caption{\textbf{NGM-78 model:} compare TPPF (trained with $L_{RE}$, $L_{CE}$, or $L_{RECE}$) and its competitors (BPF, iAPF and FA-APF).  \textbf{Boxplot of $\log Z$} using 20 replicates for (a) $d = 2$ and (b) $d=5$. The red cross represents the mean and the red dashed line represents the median. }
    \label{fig:NGM}
\end{figure}

\begin{table}[H]
    \centering
\begin{tabular}{cccccc}
\toprule  
 & d=1 & d=2 & d=5  & d=15 & d=20 \\ 
\midrule 
BPF & 0.26 & 0.39 & 0.43  & 0.55 & 0.53 \\
TPPF(RE) & 0.22 & 0.37 & 0.01  & 0.50 & 0.52 \\
TPPF(CE) & 0.25 & 0.33 & 0.39  & 0.61 & 0.48 \\
TPPF(RECE) & 0.24 & 0.36 & 0.44  & 0.49 & 0.52 \\
FA-APF & 0.03 & 0.26 & 0.34  & -- & -- \\
iAPF & 0.60 & 0.60 & 0.78 &   1.94 & 2.10 \\
\bottomrule     
\end{tabular}
\caption{\textbf{NGM-78 model:} compare TPPF (trained with $L_{RE}$, $L_{CE}$, or $L_{RECE}$) and its competitors (BPF, iAPF and FA-APF). \textbf{Relative variance} computed using Monte Carlo approximation with $10^6$ samples for $d \in \{1,2,5,15,20 \}$. For $d\geq 15$, we cannot obtain reasonable estimates with the FA-APF in a feasible computational time.}
    \label{tab:NGMrv}
\end{table}

\subsection{Lorenz-96 model}

We consider the Lorenz-96 (a nonlinear model) with white noise and partial observation \cite{lorenz1996predictability,heng2020controlled,murray2021anytime}. The discrete model can be viewed as the Euler-Maruyama scheme of the following interacting particle system consisting of $d$ particles in $\mathbb{R}^1$:
\begin{equation}\label{eq:SDELorenz96}
    dX^i = \left(-X^{i-1}X^{i-2} + X^{i-1}X^{i+1} - X^i + \alpha \right)dt + \sigma^2 dB^i,\quad 1 \leq i \leq d,
\end{equation}
where $\alpha \in \mathbb{R}$, $\sigma^2 \in \mathbb{R}^{+}$, $B^i$ ($1\leq i\leq d$) are independent Brownian motions, and the indices should be understood modulo $d$. The observation is through $Y_t \sim \mathcal{N}(\cdot; HX_t,\Sigma_{OB})$, where $H$ is a diagonal matrix with $H_{ii} = 1$ ($1 \leq i \leq d-2$), $H_{ii} = 0$ ($d-1 \leq i \leq d$). Note that this model does not have an analytical solution. We use the non-parametric implementation in this example.

Choose $\alpha = 3.0$, $\Sigma_{OB} = I_d$, $\sigma = 1$. The discrete-time Markov chain we study is the Euler-Maruyama scheme for \eqref{eq:SDELorenz96} with time step $\Delta t = 0.01$ and total time $T = 0.5$ (we choose a smaller $T = 0.1$ when computing the relative variance). Consider $d \in \{3,5,10 \}$. Similarly to the NGM-78 model, the neural network structure is chosen to be a 2-layer network, each with width 10. The learning rate is set to be 0.001, the optimization method is chosen to be ADAM, and the particle number is set to be 200. We report the empirical standard deviations and the relative variance in Table \ref{table:combined2} below:

\begin{table}[H]
\centering
\begin{subtable}{0.50\linewidth}
\centering
\begin{tabular}{cccc}
\hline
&d=3 & d=5  &  d=10  \\ 
\midrule 
BPF & 0.36& 0.64 &  2.14 \\
TPPF(RE) &0.26 & 0.43 &  1.23 \\
TPPF(CE) &0.31 & 0.48 &  1.73\\
TPPF(RECE) & 0.21& 0.51 & 1.69 \\
FA-APF &0.31 & 0.62 & 1.86 \\
iAPF &0.29 & 0.48 &  1.81\\
\hline
\end{tabular}
\caption{Standard deviation}
\end{subtable}\hfill
\begin{subtable}{0.50\linewidth}
\centering
\begin{tabular}{ccc}
\hline
 d=3 & d=5 & d=10 \\
\hline
0.71 & 1.38 & 5.04 \\
0.71 & 1.33 & 4.78 \\
0.70 & 1.34 & 4.66 \\
0.72 & 1.30 & 4.82 \\
0.61 & 1.16 & 3.46 \\
0.32 & 0.34 & 1.01 \\
\hline
\end{tabular}
\caption{Relative variance}
\end{subtable}
\caption{\textbf{Lorenz-96 model:} compare TPPF (trained with $L_{RE}$, $L_{CE}$, or $L_{RECE}$) and its competitors (BPF, iAPF and FA-APF). (a): Empirical \textbf{standard deviation} of $\log Z$ with 20 replicates for $d \in \{3,5,10\}$. (b): \textbf{Relative variance} computed using Monte Carlo approximation with $10^6$ samples for $d \in \{3,5,10 \}$.  }
\label{table:combined2}
\end{table}

To test the sensitivity of the proposed method, we fix $d = 3$, $\Sigma_{OB}=I_d$, $\sigma=1$ and run the experiment for different $\alpha$. See the results in Figure \ref{fig:alpha}.



\begin{figure}[htbp]
    \centering
    \begin{subfigure}{0.5\textwidth}
        \centering
        \includegraphics[width=\textwidth]{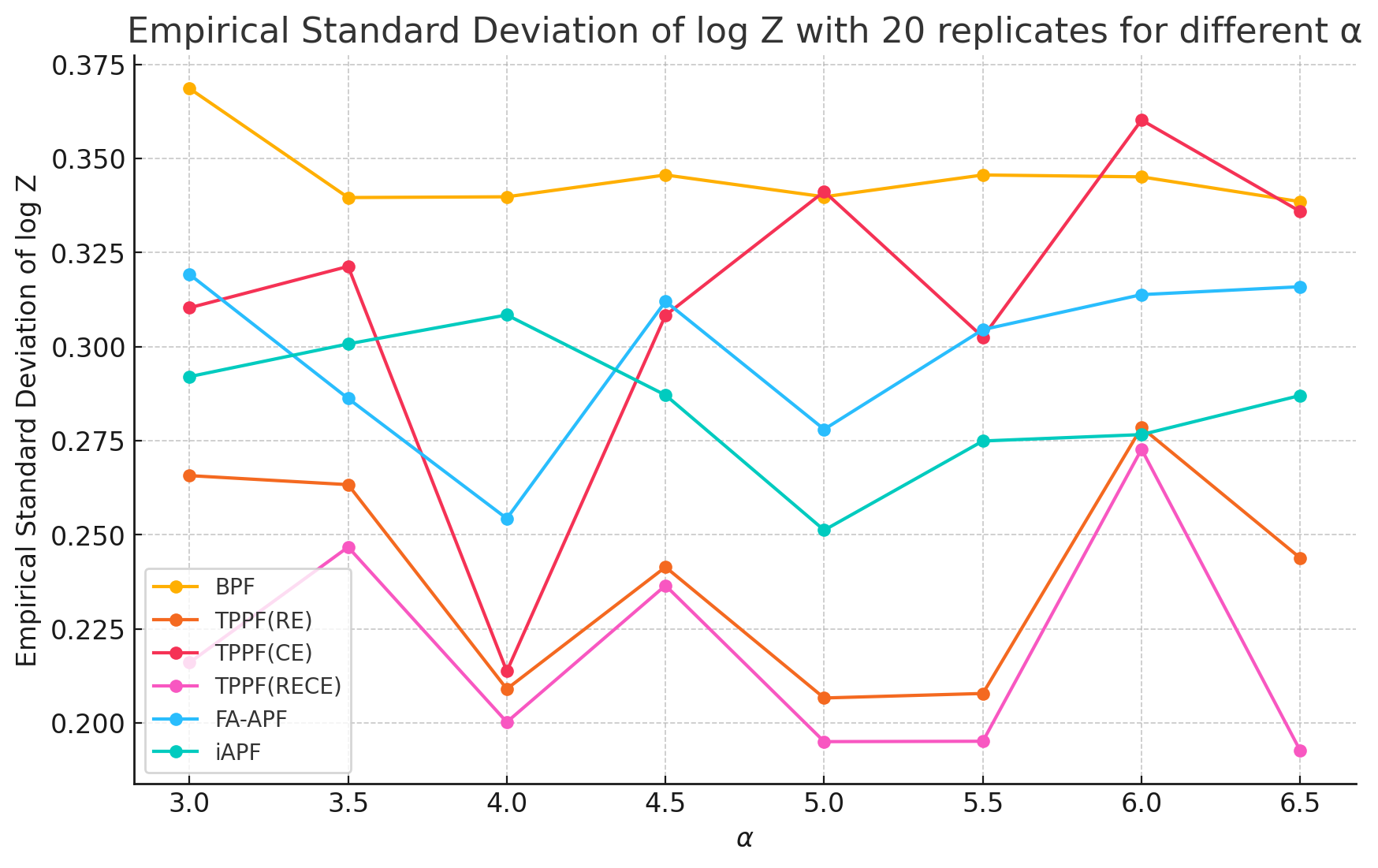}
        \caption{ }
        \label{fig:lorenz96l}
    \end{subfigure}
    \hfill
    \begin{subfigure}{0.45\textwidth}
        \centering
        \includegraphics[width=\textwidth]{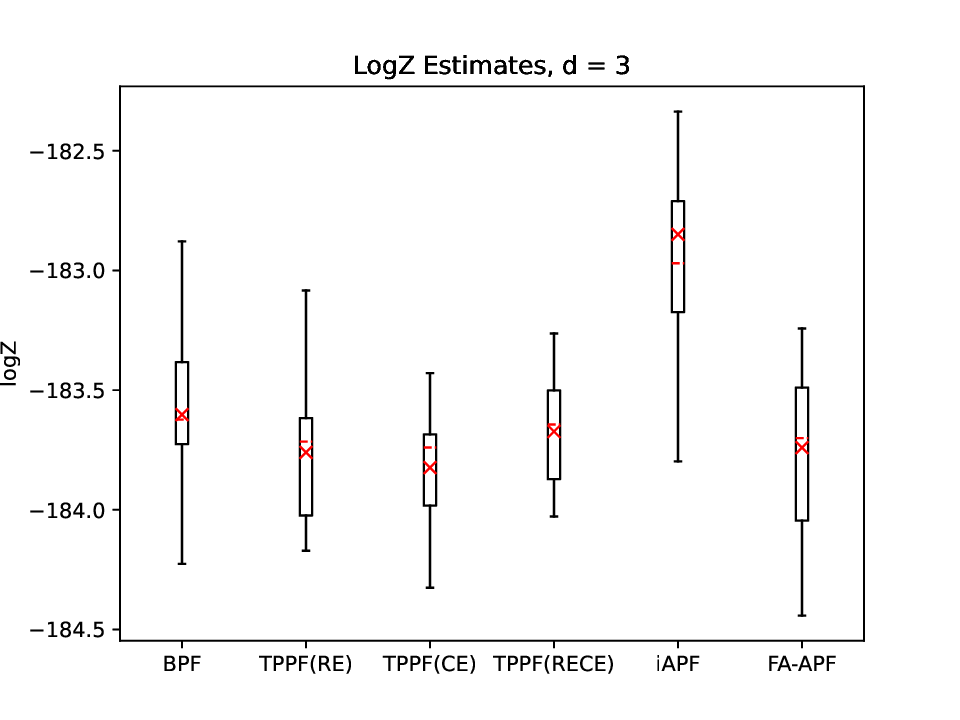}
        \caption{ }
        \label{fig:lorenz96r}
    \end{subfigure}
    \caption{\textbf{Lorenz-96 model:} compare TPPF (trained with $L_{RE}$, $L_{CE}$, or $L_{RECE}$) and its competitors (BPF, iAPF and FA-APF). \textbf{(a):} Empirical standard deviation for different external force strength $\alpha$ under dimension $d = 3$, using 20 replicates. \textbf{(b):} Boxplot of $\log Z$ using 20 replicates for $d = 3$, $\alpha = 3.0$. The red cross represents the mean and the red dashed line represents the median. }
    \label{fig:alpha}
\end{figure}


As we can see from Table \ref{table:combined2} (a) and Figure \ref{fig:alpha}, the proposed TPPF algorithm (especially the ones using loss functions $L_{RE}$ and $L_{RECE}$) behaves better than BPF and FA-APF, and slightly better than iAPF.  We also observe that for the current partially-observed nonlinear model, sometimes the result of iAPF is more biased (see (b) in Figure \ref{fig:alpha}), despite its relatively low variance.


\section{Conclusion}\label{sec:conclusion}
In this paper, we study the discrete-time twisted particle filters (TPF) from a continuous-time perspective. In detail, we propose a novel TPF algorithm, Twisted-Path Particle Filter (TPPF), inspired by algorithms in continuous-time settings. In TPPF, we choose some specific KL-divergence between path measures as the loss function and learn the twisting function parameterized by a neural network. Roughly speaking, this can be viewed as a policy gradient method under the current setting. We also give some numerical examples to illustrate the capability of the proposed algorithm. Some possible future work may include: seeking other practical TPF algorithms guided by the continuous-time importance sampling algorithms, such as other ideas from reinforcement learning. It is also interesting to try to understand existing TPF algorithms rigorously by finding their continuous-time limit.

\section*{Acknowledgments}
This work is supported in part by National Science Foundation via grant DMS-2309378. 
This work is done during Yuliang Wang's visit to Duke University. He thanks the Mathematics Department for their hospitality. 
The authors would like to thank Pieralberto Guarniero, Adam M. Johansen, Anthony Lee and Wei Deng for helpful discussions on implementation details of the iAPF method.

\appendix
\section{Proofs of Section \ref{sec:ISsetting}}\label{app:proofsec2}
\subsection{Proof of Proposition \ref{eq:discretegirsanov}}
\begin{proof}
Recall the notation $\hat{P}[\varphi](k,x) = \int \varphi(k,y') \hat{P}(x,dy')$. Then by definition,
\begin{equation*}
\begin{aligned}
&\quad\mathbb{E}_x\left[\prod_{k=0}^n\hat{g}_k^\varphi(\hat{X}^\varphi_k)\right]
= \int\hat{g}_0^\varphi(x) \prod_{k=1}^n  \hat{g}_k^\varphi(x_k) \hat{P}_k^\varphi(x_{k-1},x_k)dx_{1:n} \\
&=\int\hat{g}_0(x) \Big(\prod_{k=1}^n  \hat{g}_k(x_k)\Big)\Big(\frac{\hat{P}[\varphi](1,x)}{\varphi(n,x_n)}\prod_{k=1}^{n-1} \frac{\hat{P}[\varphi](k+1,x_k)}{\varphi(k,x_k)}  \prod_{k=1}^n \frac{\varphi(k,x_k)}{\hat{P}[\varphi](k,x_{k-1})} \hat{P}_k(x_{k-1},x_k)\Big)dx_{1:n} \\
&=\int\hat{g}_0(x) \prod_{k=1}^n  \hat{g}_k(x_k) \hat{P}_k(x_{k-1},x_k)dx_{1:n} = Z_{dis}(x).
\end{aligned}
\end{equation*}
\end{proof}

\subsection{Details on the optimal twisting function and the zero-variance property}

The optimal twisting function $\varphi^{*}$ has the following useful properties, which have also been discussed in related literature such as \cite{heng2020controlled,guarniero2017iterated}:
\begin{proposition}\label{prop:discretezerovariance}
Consider the functions $\varphi^{*}(k,\cdot)$ ($1\leq k \leq n$) defined by \eqref{eq:recursive}. Recall $\hat{X}_{0:n}$ is a Markov chain with transition density $\hat{P}(\cdot,\cdot)$. Then
\begin{enumerate}
    \item $\varphi^{*}(k,\cdot)$ satisfies
    \begin{equation}\label{eq:lookaheadapp}
        \varphi^{*}(k,x) = \mathbb{E}\left[\prod_{i=k}^n \hat{g}_{i}(\hat{X}_{i}) \mid \hat{X}_{k} = x\right], \quad 0 \leq k \leq n, \quad x \in \mathbb{R}^d.
    \end{equation}
    In particular, $Z_{dis}(x) = \varphi^{*}(0,x)$, $\forall x \in \mathbb{R}^d$.
    \item (zero variance property) 
    Consider the Markov chain $\hat{X}^{\varphi^{*}}_{0:n}$ with the initial state $\hat{X}^{\varphi^{*}}_0 = x$ and the transition density
    \begin{equation}
    \hat{P}^{\varphi^{*}}_k(x,y) := \frac{\varphi^{*}(k,y) }{\int \varphi^{*}(k,y') \hat{P}(x,y')dy'}\hat{P}(x,y),\quad 1\leq k\leq n,
    \end{equation}
    Define
    \begin{equation}
    \hat{W}(\hat{X}^{\varphi^{*}}) := \hat{L}(\hat{X}^{\varphi^{*}}) \prod_{k=0}^n \hat{g}_k(\hat{X}_k^{\varphi^{*}})
    \end{equation}
    with $\hat{L}(\hat{X}^{\varphi^{*}}) := \prod_{k=0}^n\ell_k^{*}(\hat{X}^{\varphi^{*}}_k)$ and
    \begin{equation}
    \begin{aligned}
    &\ell_0^{\varphi^{*}}(x) := \int \varphi^{*}(1,y)\hat{P}(x,dy),\quad \ell_n^{\varphi^{*}}(x) := \frac{1}{\varphi^{*}(n,x)},\\
    &\ell^{\varphi^{*}}_k(x) := \frac{\int \varphi^{*}(k+1,y)\hat{P}(x,dy)}{\varphi^{*}(k,x)},\quad 1\leq k\leq n.
    \end{aligned}
    \end{equation}
    Then, $\hat{W}(\hat{X}^{\varphi^{*}})$ is an unbiased estimate of $Z_{dis}(x)$  with zero variance. Namely,
    \begin{equation}\label{eq:hatWequalZdis}
    \hat{W}(\hat{X}^{\varphi^{*}}) = \varphi^{*}(0,x) = Z_{dis}(x), \quad \forall x \in \mathbb{R}^d.
    \end{equation}
\end{enumerate}
\end{proposition}

\begin{proof}
The proof is straightforward.
\begin{enumerate}[wide]
    \item We first prove the first claim. Since $\varphi^{*}(n,x) = \hat{g}_n(x)$, direct calculation yields
\begin{equation*}
    \begin{aligned}
        &\varphi^{*}(n-1,x) = \hat{g}_{n-1}(x) \int \hat{g}_n(y) \hat{P}(x,dy) = \mathbb{E}\left[\hat{g}_n(\hat{X}_{n})\hat{g}_{n-1}(\hat{X}_{n-1})\mid \hat{X}_{n-1} = x\right],\\
        &\varphi^{*}(n-2,x) = \hat{g}_{n-2}(x) \int \mathbb{E}\left[\prod_{i=n-1}^n \hat{g}_i(\hat{X}_{i})\mid \hat{X}_{n-1} = y\right] \hat{P}(x,dy)\\
        &= \mathbb{E}\left[\prod_{i=n-2}^n \hat{g}_i(\hat{X}_{i})\mid \hat{X}_{n-2} = x\right],\\
        &\cdots\\
        &\varphi^{*}(k,x) = \hat{g}_{k}(x) \int \mathbb{E}\left[\prod_{i=k+1}^n \hat{g}_i(\hat{X}_{i})\mid \hat{X}_{k+1} = y\right] \hat{P}(x,dy)\\
        &= \mathbb{E}\left[\prod_{i=k}^n \hat{g}_i(\hat{X}_{i})\mid \hat{X}_{k} = x\right],\\
        &\cdots\\
        &\varphi^{*}(0,x) = \hat{g}_{0}(x) \int \mathbb{E}\left[\prod_{i=1}^n \hat{g}_i(\hat{X}_{i})\mid \hat{X}_1 = y\right] \hat{P}(x,dy)\\
        &= \mathbb{E}\left[\prod_{i=0}^n \hat{g}_i(\hat{X}_{i})\mid \hat{X}_{0} = x\right] = Z_{dis}(x).\\
    \end{aligned}
\end{equation*}
\item Next, we prove the unbiased and zero-variance property of $\hat{W}$. The unbiased property is a direct result of Proposition \ref{eq:discretegirsanov} and Remark \ref{rmk:discretegirsanov}, and is valid for any twisting function sequence $\varphi(k,\cdot)$ ($1\leq k \leq n$). To prove the zero-variance property, we observe that under the probability measure $P^{*}$,
\begin{equation*}
    \hat{W} = \prod_{k=0}^n\ell_k^{*}(\hat{X}_k) \prod_{k=0}^n \hat{g}_{k} (\hat{X}_k) = \frac{\hat{g}_n(\hat{X}_n)\varphi^{*}(0,\hat{X}_0)}{\varphi^{*}(n,\hat{X}_n)} = \varphi^{*}(0,\hat{X}_0) = Z_{dis}(\hat{X}_0).
\end{equation*}
\end{enumerate}
\end{proof}

Moreover, a direct corollary of the zero-variance property in Proposition \ref{prop:discretezerovariance} is that, with the optimal twisting function, after Monte Carlo approximation with $N$ samples, the output $Z^{N,\varphi^{*}}_{dis}(x)$ of twisted particle filter (Algorithm \ref{alg:tpf}) is a perfect approximation of the target $Z_{dis}(x)$.

\begin{corollary}\label{coro:zerovarMC}
Consider the optimal twisting function defined in \eqref{eq:recursive}. Then for any positive integer $N$ and $x \in \mathbb{R}^d$,
\begin{equation}
    Z^{N,\varphi^{*}}_{dis}(x) = Z_{dis}(x).
\end{equation}
\end{corollary}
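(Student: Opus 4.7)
The plan is to directly compute what the twisted potentials $\hat{g}^{\varphi^*}_k$ reduce to when $\varphi^*$ is the optimal twist, and then observe that the product defining $Z^{N,\varphi^*}_{dis}(x)$ telescopes to $Z_{dis}(x)$ for \emph{every} sample — not just in expectation. This reflects the fact (already captured in Proposition \ref{prop:discretezerovariance}) that under $\varphi^*$ the importance weight along any path is constant.

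The first step is to read off explicit formulas for $\ell_k^{\varphi^*}$. Rearranging the recursion \eqref{eq:recursive} gives $\hat{P}[\varphi^*](k+1,x) = \varphi^*(k,x)/\hat{g}_k(x)$ for $0\leq k\leq n-1$. Substituting this into \eqref{eq:ellk} yields
\begin{equation*}
\ell_0^{\varphi^*}(x) = \hat{P}[\varphi^*](1,x) = \frac{\varphi^*(0,x)}{\hat{g}_0(x)},\qquad \ell_k^{\varphi^*}(x) = \frac{1}{\hat{g}_k(x)}\ \ (1\leq k\leq n-1),\qquad \ell_n^{\varphi^*}(x) = \frac{1}{\hat{g}_n(x)}.
\end{equation*}
Multiplying by $\hat{g}_k$ as in \eqref{eq:twistedgk} collapses the twisted potentials to
\begin{equation*}
\hat{g}_0^{\varphi^*}(x) = \varphi^*(0,x) = Z_{dis}(x),\qquad \hat{g}_k^{\varphi^*}(x) \equiv 1\ \ \text{for}\ 1\leq k\leq n,
\end{equation*}
where the identification $\varphi^*(0,x) = Z_{dis}(x)$ is the look-ahead formula \eqref{eq:lookahead} at $k=0$.

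The second step is to plug these identities into the definition of the TPF output. Since $\hat{g}^{\varphi^*}_k \equiv 1$ for $k\geq 1$, we have $\frac{1}{N}\sum_{i=1}^N \hat{g}_k^{\varphi^*}(\zeta_k^i) = 1$ regardless of the particle positions. For $k=0$, the initial particles satisfy $\zeta_0^i = x$ deterministically, so
\begin{equation*}
\frac{1}{N}\sum_{i=1}^N \hat{g}_0^{\varphi^*}(\zeta_0^i) = \hat{g}_0^{\varphi^*}(x) = Z_{dis}(x).
\end{equation*}
Taking the product over $k$ as in the definition of $Z^{N,\varphi^*}_{dis}(x)$ gives the claim.

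There is no real obstacle here: the corollary is a deterministic consequence of the recursion \eqref{eq:recursive} and the fact that all initial particles are launched from the common deterministic state $x$. The only point worth flagging is that one should not invoke any law-of-large-numbers argument; the algebraic collapse of $\hat{g}_k^{\varphi^*}$ to constants is what makes the Monte Carlo error vanish identically, and this is precisely the discrete analogue of the zero-variance phenomenon recorded in Proposition \ref{prop:discretezerovariance}.
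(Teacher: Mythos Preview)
Your proof is correct and follows essentially the same route as the paper's own argument: both show via the recursion \eqref{eq:recursive} that $\hat g_k^{\varphi^*}\equiv 1$ for $1\le k\le n$ and $\hat g_0^{\varphi^*}(x)=\varphi^*(0,x)=Z_{dis}(x)$, then use the deterministic initial condition $\zeta_0^i=x$ to conclude. The only cosmetic difference is that you compute $\ell_k^{\varphi^*}$ explicitly first, whereas the paper writes $\hat g_k^{\varphi^*}$ directly.
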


\begin{proof}
By definition,
\begin{equation*}
    Z^{N.\varphi}_{dis}(x) = \prod_{k=0}^n \frac{1}{N}\sum_{i=1}^N \hat{g}^{\varphi^{*}}_k(\zeta_k^i),
\end{equation*}
and
\begin{equation*}
    \hat{g}^{\varphi^{*}}_n(x)=\hat{g}_n(x) \ell_n^{\varphi^{*}}(x) = \frac{\hat{g}_n(x)}{\varphi^{*}(n,x)} \equiv 1,
\end{equation*}
\begin{equation*}
    \hat{g}^{\varphi^{*}}_k(x) = \hat{g}_k(x) \ell_k^{\varphi^{*}}(x) = \hat{g}_k(x) \frac{\hat{P}[\varphi^{*}](k+1,x)}{\varphi^{*}(k,x)} \equiv 1,\quad 1\leq k \leq n-1,
\end{equation*}
\begin{equation*}
    \hat{g}^{\varphi^{*}}_0(x) = \hat{g}_0(x) \ell_0^{\varphi^{*}}(x) = \hat{g}_0(x) \hat{P}[\varphi^{*}](1,x) = \varphi^{*}(0,x).
\end{equation*}
Therefore, 
\begin{equation*}
    Z^{N,\varphi^{*}}_{dis}(x) =\varphi^{*}(0,x)= Z_{dis}(x),\quad \forall x \in \mathbb{R}^d.
\end{equation*}

\end{proof}

\subsection{Details on the optimal control and the zero-variance property}
Here we give more details on the optimal control and the zero-variance property mentioned in Section \ref{sec:consetting}.

\begin{proposition}\label{prop:changeofmeasurecon}
Fix $T>0$ and $x\in \mathbb{R}^d$. Given functions $h: \mathbb{R}_{+}\times \mathbb{R}^d \rightarrow \mathbb{R}$, $g: \mathbb{R}^d \rightarrow \mathbb{R}$, $b: \mathbb{R}^d \rightarrow \mathbb{R}^d$,
\begin{enumerate}
\item (Feynman-Kac representation) The solution to the PDE \eqref{eq:optimalu} has the following representation:
\begin{equation}\label{eq:representu}
        v^{*}(t,x) = \mathbb{E}\left[e^{\int^{T}_{t} h(s,X_s) ds} g(X_T) \mid X_{t} = x \right],\quad 0\leq t\leq T,\quad x\in\mathbb{R}^d,
\end{equation}
where the process $X_t$ solves the SDE \eqref{eq:SDE}. In particular, $Z_{con}(x) = v^{*}(0,x)$, $\forall x \in \mathbb{R}^d$.

\item (zero-variance property) Let $P$ be the probability measure such that under which $(B_t)_{t\geq 0}$ is a Brownian motion in $\mathbb{R}^d$. Then there exists a probability measure $P^{*}$ such that under $P^{*}$, the law of the controlled process with control $u^{*} = \sqrt{2}\partial_x \log v^{*}$
\begin{equation*}
    X^{u^{*}}_t = X_0^{u^{*}} + \int_0^t \left(b(X^{u^{*}}_s) + 2\partial_x \log v^{*}(s,X^{u^{*}}_s)\right)ds + \sqrt{2}B_t,\quad X^{u^{*}}_0 = x
\end{equation*}
is the same of the law of $X_t$ under $P$ (recall that $X_t$ satisfies \eqref{eq:SDE}). Moreover, define
\begin{equation*}
    L(t) := \exp\left(-\int_0^t \partial_x \log v^{*}(s,X^{u^{*}}_s) dB_s -  \frac{1}{2}\int_0^t \left|\partial_x \log v^{*}(s,X^{u^{*}}_s)\right|^2 ds\right),\quad 0\leq t \leq T,
\end{equation*}
and
\begin{equation*}
    W := e^{\int_0^T h(s,X^{u^{*}}_s) ds}g(X^{u^{*}}_T)L(T).
\end{equation*}
We have 
\begin{equation}\label{eq:equalascon}
    W = v^{*}(0,x) = Z_{con}(x) \quad P^{*}-a.s..
\end{equation}
\end{enumerate}
\end{proposition}
\begin{proof}
\begin{enumerate}[wide]
\item We first prove the Feynman-Kac representation formula. Fix $0\leq t\leq T$, define the following process for $s \in [t,T]$
\begin{equation*}
    Y(s) := e^{\int_{t}^s h(\tau,X_{\tau}) d\tau} v^{*}(s,X_s).
\end{equation*}
Clearly,
\begin{equation*}
    Y(t) = v^{*}(t,X_{t}),\quad Y(T) = e^{\int_{t}^T h(\tau ,X_{\tau}) d\tau} g(X_T).
\end{equation*}
By It\^o's formula,
\begin{equation*}
\begin{aligned}
    dY(s) &= e^{\int_{t}^s h(\tau,X_{\tau}) d\tau} h(s,X_s) v^{*}(s,X_s) ds\\
    &\quad+ e^{\int_{t}^s h(\tau,X_{\tau}) d\tau} \nabla_x v^{*}(s, X_s) \cdot \left(b(X_s) ds + \sqrt{2} dB_s \right)\\
    &\quad+ e^{\int_{t}^s h(\tau,X_{\tau}) d\tau} \Delta_x v^{*}(s,X_s)ds\\
    &\quad + e^{\int_{t}^s h(\tau,X_{\tau}) d\tau} \partial_t v^{*}(s,X_s) ds 
\end{aligned}
\end{equation*}
By \eqref{eq:optimalu}, we have
\begin{equation*}
    dY(s) = \sqrt{2} e^{\int_{t}^s h(\tau,X_{\tau}) d\tau} \nabla_x v^{*}(s, X_s) \cdot dB_s.
\end{equation*}
Hence, $Y(s)$ is a martingale, and consequently,
\begin{equation*}
    \mathbb{E}\left[Y(T)\mid X_{t} = x\right] = \mathbb{E}\left[Y(t)\mid X_{t} = x\right],
\end{equation*}
namely,
\begin{equation*}
    v^{*}(t,x) = \mathbb{E}\left[ e^{\int_{t}^T h(\tau, X_{\tau}) d\tau} g(X_T)\mid X_{t} = x\right],\quad \forall x \in \mathbb{R}^d,\quad \forall t\in[0,T].
\end{equation*}

\item The existence and expression of the Radon-Nikodym derivative $L(T)$ is guaranteed by the classical Girsanov's theorem. We focus on the derivation of the zero-variance property \eqref{eq:equalascon} here. 

For $s \in [0,T]$, define the process
\begin{equation*}
    \omega_s := e^{\int_0^s h(\tau, X^{u^{*}}_{\tau})d\tau}v^{*}(s,X^{u^{*}}_s)L(s).
\end{equation*}
Clearly, 
\begin{equation*}
    \omega_0 = v^{*}(0,X_0),\quad \omega_T = W.
\end{equation*}
By It\^o's formula,
\begin{equation*}
\begin{aligned}
    d\omega_s &= h(s, X_s^{u^{*}})e^{\int_0^s h(\tau, X^{u^{*}}_{\tau})d\tau}v^{*}(s,X^{u^{*}}_s)L(s) ds\\
    &\quad+ e^{\int_0^s h(\tau, X^{u^{*}}_{\tau})d\tau}\partial_s v^{*}(s,X^{u^{*}}_s)L(s)ds\\
    &\quad+e^{\int_0^s h(\tau, X^{u^{*}}_{\tau})d\tau}\partial_x v^{*}(s,X^{u^{*}}_s) \cdot \left(b(X_s^{u^{*}}) ds + 2 \partial_x \log v^{*}(s,X^{u^{*}}_s) ds + \sqrt{2}dB_s \right)L(s)\\
    &\quad+e^{\int_0^s h(\tau,X^{u^{*}}_{\tau})d\tau}v^{*}(s,X^{u^{*}}_s)L(s)\left(-\sqrt{2}\partial_x \log v^{*}(s,X_s^{u^{*}}) dB_s \right)\\
    &\quad-e^{\int_0^s h(\tau, X^{u^{*}}_{\tau})d\tau}L(s)\left(\sqrt{2}\partial_x v^{*}(s,X_s^{u^{*}}) \cdot \sqrt{2} \partial_x \log v^{*}(s,X_s^{u^{*}})\right)ds.
\end{aligned}
\end{equation*}
Using the time evolution \eqref{eq:optimalu} for $u^{*}$, we have
\begin{equation*}
    d\omega_s \equiv 0.
\end{equation*}
Therefore
\begin{equation*}
    v^{*}(0,X_0) = \omega_0 = \omega_T = W \quad P^{*}-a.s..
\end{equation*}

\end{enumerate}
    
\end{proof}

\section{Proof of Section \ref{sec:TPPF}}
\subsection{Proof of Lemma \ref{lmm:DV}}
\begin{proof}
We refer to \cite{hartmann2017variational} for a similar proof. The proof is based on Jensen's inequality. In fact, by direct calculation and convexity of $-\log(\cdot)$, we have
\begin{multline*}
    -\log\mathbb{E}^{X \sim P}\left[\exp\left(-W(X) \right)\right] = -\log \int e^{-W(x)} P(dx) = -\log \int e^{-W(x)}\frac{dP}{dQ}(x) Q(dx)\\
    \leq \int \left(W(x) - \log \frac{dP}{dQ}(x)\right) Q(dx) =  \mathbb{E}^{X\sim Q}\left[W(X)\right] + \DKL(Q \| P), 
\end{multline*}
and the equality holds if and only if
\begin{equation*}
    \frac{dQ}{dP}(X) = \exp\left(-\log \mathbb{E}^{X \sim P}\left[\exp(-W(X))\right] -W(X)\right) .
\end{equation*}
\end{proof}

\subsection{Proof of Proposition \ref{eq:jvarphiexpression}}
\begin{proof}
By definition \eqref{eq:Jvarphi}, we have
\begin{equation*}
\begin{aligned}
    J(\varphi) &= \mathbb{E}_x\left[-\sum_{k=0}^n\log \hat{g}_k(\hat{X}_k)\right]\\
    &\quad+ \int \log \left(\prod_{k=1}^n\frac{\varphi(k,x_k)\hat{P}(x_{k-1},x_k)}{ \hat{P}[\varphi](k,x_{k-1})} / \prod_{k=1}^n \hat{P}(x_{k-1},x_k)\right) \prod_{k=1}^n\hat{P}^\varphi_k(x_{k-1},x_k) dx_{1:n}\\
    &= \mathbb{E}_x\left[-\sum_{k=0}^n \log \hat{g}_k(\hat{X}_k^{\varphi}) + \sum_{k=1}^n \log \frac{\varphi(k,\hat{X}_k^{\varphi})}{ \hat{P}[\varphi](k,X^{\varphi}_{k-1})}\right],
\end{aligned}
\end{equation*}
and
\begin{equation*}
\begin{aligned}
    &\quad \DKL(P^{\varphi}\|P^{\varphi^{*}}) = \int\log \left(\prod_{k=1}^n\frac{\varphi(k,x_k)}{ \hat{P}[\varphi](k,x_{k-1})} / \prod_{k=1}^n\frac{\varphi^{*}(k,x_k)}{ \hat{P}[\varphi^{*}](k,x_{k-1})}\right)\prod_{k=1}^n\hat{P}_k^\varphi(x_{k-1},x_k) dx_{1:n}\\
    & =\mathbb{E}_x\Big[\sum_{k=1}^n \log \frac{\varphi(k,\hat{X}^\varphi_k)}{ \hat{P}[\varphi](k,\hat{X}^{\varphi}_{k-1})}\Big]\\
    &\quad- \int\log \Big(\frac{\varphi^{*}(n,x_n)}{\varphi^{*}(0,x)}\prod_{k=0}^{n-1}\frac{\varphi^{*}(k,x_k)}{\hat{P}[\varphi^{*}](k+1,x_k)}\Big)\prod_{k=1}^n\hat{P}^\varphi_k(x_{k-1},x_k) dx_{1:n}\\
    &= \mathbb{E}_x\Big[-\sum_{k=0}^n \log \hat{g}_k(\hat{X}_k^{\varphi}) + \sum_{k=1}^n \log \frac{\varphi(k,\hat{X}_k^{\varphi})}{ \hat{P}[\varphi](k,\hat{X}^{\varphi}_{k-1})}\Big] + \log \varphi^{*}(0,x),
\end{aligned}
\end{equation*}
where we have used the recursive relation \eqref{eq:recursive} in the last equality. Moreover, using the derived expression for $J(\varphi)$ and \eqref{eq:recursive}, we know that
\begin{equation*}
    J(\varphi^{*}) = \log \varphi^{*}(0,x).
\end{equation*}
Consequently,
\begin{equation*}
    J(\varphi) = J(\varphi^{*}) + \DKL(P^{\varphi}\|P^{\varphi^{*}}).
\end{equation*}
\end{proof}

\subsection{Proof of Proposition \ref{prop:KLcontrolinequality}}
\begin{proof}
The proof relies on the following auxiliary results:
\begin{itemize}
    \item (Lemma \ref{lmm:generalJensen}, generalized Jensen's inequality) Given deterministic functions $\phi: \Omega \rightarrow \mathbb{R}$ and $f: \mathbb{R} \rightarrow \mathbb{R}$. Assume that $f$ is convex. Let $\lambda$, $\lambda'$ be two probability measures on $\Omega$ that are absolutely continuous with each other. Define the functional
    \begin{equation*}
        \mathcal{J}(f,\lambda,\phi) := \mathbb{E}_\lambda f(\phi) - f\mathbb{E}_\lambda \phi.
    \end{equation*}
    Then,
    \begin{equation}\label{eq:generalJensen}
        m\mathcal{J}(f,\lambda, \phi) \leq \mathcal{J}(f,\lambda', \phi) \leq M\mathcal{J}(f,\lambda, \phi),
    \end{equation}
    where $m := \inf_{E\in \mathcal{B}(\Omega)} \frac{\lambda'(E)}{\lambda(E)}$, $M := \sup_{E\in \mathcal{B}(\Omega)} \frac{\lambda'(E)}{\lambda(E)}$.
    \item (Lemma \ref{lmm:r2}, equivalence with the $\mathcal{X}^2$-divergence)
    \begin{equation}\label{eq:equivX2}
        r^2(Q) = \mathcal{X}^2(Q^{*} |Q),
    \end{equation}
    where the $\mathcal{X}^2$-divergence is defined by
    \begin{equation*}
        \mathcal{X}^2(Q^{*} | Q):=\mathbb{E}_Q\left[\left|\frac{dQ^{*}}{dQ}\right|^2 - 1\right].
    \end{equation*}
\end{itemize}
Now, by Jensen's inequality, we have
\begin{equation}\label{eq:KLjensen}
    \DKL(Q^{*} \| Q) = \mathbb{E}_{Q^{*}}\left[\log \frac{dQ^{*}}{dQ}\right] \leq \log\mathbb{E}_{Q^{*}}\left[ \frac{dQ^{*}}{dQ}\right]
\end{equation}
Combining \eqref{eq:KLjensen} and \eqref{eq:equivX2}, we have
\begin{equation*}
    r^2(Q) = \mathcal{X}^2(Q^{*} | Q) = \mathbb{E}_Q\left[\left|\frac{dQ^{*}}{dQ}\right|^2 - 1\right] = \mathbb{E}_{Q^{*}}\left[\frac{dQ^{*}}{dQ}\right] - 1 \geq e^{\DKL(Q^{*} \| Q)} - 1.
\end{equation*}
For the second claim in Proposition \ref{prop:KLcontrolinequality}, we choose $\lambda' = Q^{*}$, $\lambda = Q$, $\phi = \frac{dQ^{*}}{dQ}$, and $f = -\log$ in \eqref{eq:generalJensen}. Then,
\begin{equation}\label{eq:mathcalJ1}
    \mathcal{J}(f,\lambda,\phi) = -\mathbb{E}_{Q}\log \frac{dQ^{*}}{dQ} + \log \mathbb{E}_{Q} \frac{dQ^{*}}{dQ} = \DKL(Q \| Q^{*}),
\end{equation}
and
\begin{equation}\label{eq:mathcalJ2}
\begin{aligned}
    \mathcal{J}(f,\lambda',\phi) &= -\mathbb{E}_{Q^{*}} \log \frac{dQ^{*}}{dQ} + \log \mathbb{E}_{Q^{*}} \frac{dQ^{*}}{dQ}\\
    &= -\DKL(Q^{*} \| Q) + \log (\mathcal{X}^2(Q^{*} | Q) + 1) = -\DKL(Q^{*} \| Q) + \log (r^2(Q) + 1),
\end{aligned}
\end{equation}
where we have used \eqref{eq:equivX2} in the last equality. Combining \eqref{eq:generalJensen}, \eqref{eq:mathcalJ1} and \eqref{eq:mathcalJ2}, we obtain the second claim 
\begin{equation*}
        e^{m\DKL(Q \| Q^{*}) + \DKL(Q^{*} \| Q)}-1 \leq r^2(Q) \leq e^{M\DKL(Q \| Q^{*}) + \DKL(Q^{*} \| Q)}-1,
\end{equation*}
where $m:=\inf_E \frac{Q^{*}(E)}{Q(E)}$ and $M := \sup_E \frac{Q^{*}(E)}{Q(E)}$.
\end{proof}

The two auxiliary lemmas used in the proof of Proposition \ref{prop:KLcontrolinequality} are given below:

\begin{lemma}[generalized Jensen's inequality]\label{lmm:generalJensen}
Given deterministic functions $\phi: \Omega \rightarrow \mathbb{R}$ and $f: \mathbb{R} \rightarrow \mathbb{R}$. Assume that $f$ is convex. Let $\lambda$, $\lambda'$ be two probability measures on $\Omega$ that are absolutely continuous with each other. Define the functional
    \begin{equation*}
        \mathcal{J}(f,\lambda,\phi) := \mathbb{E}_\lambda f(\phi) - f(\mathbb{E}_\lambda \phi).
    \end{equation*}
    Then,
    \begin{equation}
        m\mathcal{J}(f,\lambda, \phi) \leq \mathcal{J}(f,\lambda', \phi) \leq M\mathcal{J}(f,\lambda, \phi),
    \end{equation}
    where $m := \inf_{E\in \mathcal{B}(\Omega)} \frac{\lambda'(E)}{\lambda(E)}$, $M := \sup_{E\in \mathcal{B}(\Omega)} \frac{\lambda'(E)}{\lambda(E)}$.
\end{lemma}

\begin{proof}
We first show that
\begin{equation}\label{eq:goaljenson}
    m\mathcal{J}(f,\lambda, \phi) \leq \mathcal{J}(f,\lambda',\phi).
\end{equation}
Taking $E = \Omega$, we have 
\begin{equation*}
    m := \inf_E \frac{\lambda'(E)}{\lambda(E)} \leq \frac{\lambda'(E)}{\lambda(E)} = 1.
\end{equation*}
Without loss of generality, assume $m<1$. (If $m=1$, then $\lambda \equiv \lambda'$, and the argument is trivial). \eqref{eq:goaljenson} is then equivalent to
\begin{equation}\label{goaljensenequiv}
    m\mathbb{E}_{\lambda} f(\phi) - mf(\mathbb{E}_\lambda \phi) \leq \mathbb{E}_{\lambda'}f(\phi) - f(\mathbb{E}_{\lambda'}\phi).
\end{equation}
Since $m \in (0,1)$, and $m = \inf_E \frac{\lambda'(E)}{\lambda(E)}$, the probability $\bar{\lambda} := \frac{\lambda' - m\lambda}{1-m}$ is well-defined. Then, by Jensen's inequality, since $f$ is convex, we have
\begin{equation*}
    \mathbb{E}_{\lambda'}f(\phi) - m\mathbb{E}_\lambda f(\phi) = (1-m)\mathbb{E}_{\bar{\lambda}} f(\phi) \geq (1-m)f(\mathbb{E}_{\bar{\lambda}} \phi) = (1-m)f\left(\frac{\mathbb{E}_{\lambda'}\phi - m\mathbb{E}_{\lambda} \phi}{1-m} \right).
\end{equation*}
Using convexity of $f$ again, we have
\begin{equation*}
    (1-m)f\left(\frac{\mathbb{E}_{\lambda'}\phi - m\mathbb{E}_{\lambda} \phi}{1-m} \right) + mf(\mathbb{E}_\lambda \phi) \geq f(\mathbb{E}_{\lambda'} \phi).
\end{equation*}
Therefore, \eqref{goaljensenequiv} holds, and thus
\begin{equation*}
    m\mathcal{J}(f,\lambda, \phi) \leq \mathcal{J}(f,\lambda',\phi).
\end{equation*}
Assuming $M>1$ and using exactly the same arguments, we have
\begin{equation*}
     M\mathcal{J}(f,\lambda, \phi) \geq \mathcal{J}(f,\lambda', \phi).
\end{equation*}
\end{proof}

\begin{lemma}\label{lmm:r2}
Recall the definitions of $Q$, $Q^{*}$ and the relative variance $r(Q)$ in Proposition \ref{prop:KLcontrolinequality}. Then
    \begin{equation}
        r^2(Q) = \mathcal{X}^2(Q^{*} | Q),
    \end{equation}
    where the $\mathcal{X}^2$-divergence is defined by
    \begin{equation*}
        \mathcal{X}^2(Q^{*} | Q):=\mathbb{E}_Q\left[\left|\frac{dQ^{*}}{dQ}\right|^2 - 1\right].
    \end{equation*}
\end{lemma}

\begin{proof}
By definition, we have
\begin{multline*}
    \mathcal{X}^2(Q^{*} | Q):=\mathbb{E}_Q\left[\left|\frac{dQ^{*}}{dQ}\right|^2 - 1\right] = \mathbb{E}_{Q}\left| \frac{dQ^{*}}{dQ}\right|^2 - \left|\mathbb{E}_{Q} \frac{dQ^{*}}{dQ}\right|^2\\ = \Var_{Q}\left(\frac{dQ^{*}}{dQ} \right)  =  \Var_{Q}\left(\frac{dQ^{*}}{dP} \frac{dP}{dQ} \right).
\end{multline*}
Recall the zero-variance property of $Q^{*}$:
\begin{equation*}
    \frac{dQ^{*}}{dP} = \frac{e^{-W}}{Z} \quad P-a.s. .
\end{equation*}
Consequently,
\begin{equation*}
    \Var_{Q}\left(\frac{dQ^{*}}{dP} \frac{dP}{dQ} \right) = \Var_{Q}\left(\frac{e^{-W}}{Z} \frac{dP}{dQ} \right) = \frac{\Var_{Q}\left(e^{-W}\frac{dP}{dQ} \right)}{Z^2} = r^2(Q).
\end{equation*}
Hence, 
\begin{equation*}
    r^2(Q) = \mathcal{X}^2(Q^{*} | Q).
\end{equation*}
\end{proof}

\section{Implementation details for numerical examples}\label{sec:details}

Here we give some details on how to parameterize the twisting function $\varphi$, how to sample from the twisted Markov transition kernel $\hat{P}^{\varphi}$, and how to calculate the normalizing constant $\hat{P}[\varphi](k,x)$. In fact, in the following three experiments, we consider two ways of parameterization for the twisting function: the robust non-parametric way and the problem-dependent parametric way. In particular, for the linear Gaussian model, we use the problem-dependent implementation, while for the other two models we use the non-parametric implementation.

\begin{enumerate}[wide]
\item The non-parametric implementation. As discussed in Section \ref{sec:TPPF}, in most applications, we approximate the twisting function by
\begin{equation*}
    \log \varphi(k,x) = NN(\theta;k,x),
\end{equation*}
where $NN(\theta;k,x)$ is a neural network with parameters $\theta$ and inputs $k$, $x$ ($0\leq k \leq n$, $x \in \mathbb{R}^d$). In our experiments, we set $NN$ as DenseNet \cite{iandola2014densenet,richter2021solving} with two hidden layers. Moreover, we add a tanh activation to the final layer so that the output $NN(\theta;k,x)$ takes value in $(\epsilon,1)$, where $\epsilon>0$ is a hyperparameter. 
This boundedness restriction is designed for the following rejection sampling step when sampling from the twisted kernel $\hat{P}^\varphi$. In fact, since we usually do not have much information about the current twisting function $\varphi$, it is not easy to sample from the twisting kernel $\hat{P}^\varphi_k(x,\cdot) \sim \varphi(k,\cdot)\hat{P}(x,\cdot)$. Under the non-parametric implementation setting, we make use of the rejection sampling recently proposed in \cite{bon2022monte}: \\
\textit{Using $X^{\varphi}_k$ and the untwisted transition kernel $\hat{P}(X^\varphi_k,\cdot)$, propose a new position $X_{pro}$, accept it with probability $\varphi(X_{pro})$.
Repeat until first acceptance.}

Moreover, in the non-parametric implementation setting, the normalizing constant $\hat{P}[\varphi]$ is calculated via Monte Carlo approximation using $\tilde{N}$ samples (following \cite{bon2022monte}, we choose $\tilde{N} = 50$ in both Lorenz-96 and NGM-78 models):
\begin{equation*}
    \hat{P}[\varphi](k,x) = \int \varphi(k,y) \hat{P}(x,dy) \approx \frac{1}{\tilde{N}}\sum_{i=1}^{\tilde{N}} \varphi(k,U_i),\quad U_i \sim \hat{P}(x,\cdot)\quad i.i.d.
\end{equation*}

Also, to make the training faster, we use the untwisted process $X$ instead of $X^{\varphi}$ when calculating the loss and the gradient (i.e. for $L_{RE}$, the loss is computed via the first line in \eqref{eq:LRE} and its gradient is computed using \eqref{eq:gradient}).

\item The parametric implementation. In the linear Gaussian model, we can calculate the analytical solution of $Z_{dis}(x)$ using the Kalman filter \cite{kalman1960new}. Moreover, we can analytically calculate the optimal twisting function using the backward recursive relation \eqref{eq:recursive}, and clearly the optimal twisting function is also Gaussian. Therefore, with so much knowledge of the solution, it is reasonable to consider a problem-dependent way to parameterize the twisting function to make the learning more efficient. In more details, via a mean-variance estimation framework, we set
\begin{equation*}
    \mu_k = NN_1(\theta_1;k) \in \mathbb{R}^d,\quad \sigma_k^2 = NN_2(\theta_2;k) \in \mathbb{R}_{+},\quad 0\leq k \leq n.
\end{equation*}
And then set
\begin{equation}\label{eq:MVE}
    \varphi(k,x) = C_k N(x;\mu_k,\sigma^2_k).
\end{equation}
Note that the twisting function is invariant of the constant scaling, so we do not care about the constant that multiplies the Gaussian. This means there is no need to learn $C_k$ in \eqref{eq:MVE}, and in our experiment we just consider $C_k = (2\pi \sigma_k^2)^{-\frac{d}{2}}$ so that $\varphi(k,x) = \exp(-|x-\mu_k|^2/2\sigma^2_k)$. Also, under such settings, we no longer need rejection sampling and the inner-loop Monte Carlo for calculating the normalizing constant, because everything can be calculated analytically. Consequently, compared with the non-parametric way, the parametric implementation is less time-consuming and the neural network is easier to train.
\end{enumerate}

Another important remark is about the relative fairness of the comparisons in the numerical experiments. We learn and run the particle filter using the same particle numbers. For the time complexity, we admit that as a training-based algorithm, our algorithm is more time-consuming, and in fact, there is not a completely fair comparison due to different non-optimal choices of network structures. In order to conduct a relatively fair comparison, we restrict the number of iterations for the training so that the total running time is comparable and similar to its competitors.





 







\section{A convergence analysis: From discrete-time to continuous-time models}\label{sec:convergence}
Here we provide a rigorous proof from the discrete-time model to the continuous-time model (recall their settings in Section \ref{sec:ISsetting}). We postpone some of the technical proofs to the end of this section.
In order to study the connection between the models, we need the following restriction for their transition kernels $P$, $\hat{P}$ and the functions determining the targets $Z_{con}(x)$, $Z_{dis}(x)$, so that we can establish the convergence results rigorously below. In detail, fix the function $b: \mathbb{R}^d \rightarrow \mathbb{R}^d$, time step $\eta > 0$, and $T :=n\eta$. We consider the Gaussian transition kernel for the discrete model:
\begin{equation}\label{eq:discretekernel}
    \hat{P}^\eta(x,dy) := (4\pi\eta)^{-\frac{d}{2}}\exp\left(-\frac{1}{4\eta}|y-x-\eta b(x)|^2 \right) dy.
\end{equation}
Recall that the transition kernel for the continuous model $P_t$ is associated with the SDE \eqref{eq:SDE} with drift $b(\cdot)$ and volatility $\sqrt{2}$, and it satisfies the Fokker-Planck equation \eqref{eq:FPcon}. Clearly, $\hat{P}^\eta$ is the one-step transition kernel of the corresponding Euler-Maruyama scheme. Moreover, for convergence analysis in this section, we consider the discrete model determined by $(\hat{P}^\eta; \hat{g}_{0:n}(\cdot); x)$ and the continuous model determined by $(P_t;\log g(\cdot,\cdot),g_T(\cdot);x)$. Correspondingly, the statistical quantities of interest are respectively
\begin{equation}\label{eq:ZdisSec3}
    Z_{dis}(x) := \mathbb{E}_x\left[\prod_{k=0}^{n-1} \hat{g}_k(\hat{X}_k)\hat{g}_n(\hat{X}_n)\right] = \mathbb{E}_x\left[\exp\left(\sum_{k=0}^{n-1}\int_{k\eta}^{(k+1)\eta} \log \left(\hat{g}_k(\hat{X}_k)\right)^{\eta^{-1}}ds\right)\hat{g}_n(\hat{X}_n)\right],
\end{equation}
and
\begin{equation}\label{eq:ZconSec3}
    Z_{con}(x) = \mathbb{E}_x\left[\exp\left(\int_0^{T} \log g(s,X_s) ds\right) g_T(X_{T})\right].
\end{equation}
In what follows, under suitable assumptions for the functions $b(\cdot)$, $\hat{g}_k(\cdot)$, $g(\cdot,\cdot)$ and $g_T(\cdot)$, we will consider the convergence from discrete-time model to the continuous-time model. In detail, we will show that at the time step $\eta \rightarrow 0$, $\hat{P}^\eta$ converges to $P_t$, and $Z_{dis}(x)$ converges to $Z_{con}(x)$.

We need the following assumptions to ensure convergence:
\begin{assumption}\label{ass1}
We assume the following conditions for $b:\mathbb{R}^d \rightarrow \mathbb{R}^d$, $\hat{g}_k: \mathbb{R}^d \rightarrow \mathbb{R}$ ($0\leq k \leq n$), $g:\mathbb{R}_{+} \times \mathbb{R}^d \rightarrow \mathbb{R}$ and $g_T: \mathbb{R}^d \rightarrow \mathbb{R}$:
\begin{itemize}
    \item[(a)] $b(\cdot)$ is $L_b$-Lipschitz (i.e. $|b(x) - b(y)|\leq L_b |x-y|$, $\forall x,y\in \mathbb{R}^d$), and $\sup_{0\leq i \leq n}\mathbb{E}|\hat{X}_{i\eta}|^2 < \infty$ uniformly in $\eta$.
    \item[(b)] $\hat{g}_n(x) \rightarrow g_T(x)$, $\eta^{-1}\log \hat{g}_{k}(x) \rightarrow \log g(k\eta, x)$ uniformly in $x$ and $k$ ($0 \leq k \leq n-1$) as $\eta \rightarrow 0$.
    \item[(c)] $\log g(t, x)$ is continuous in $t \in [0,T]$ uniformly in $x$. $\eta^{-1}\log \hat{g}_{k\eta}(x)$ is Lipschitz in $x$ uniformly for all $k$ ($0 \leq k \leq n-1$) and $\eta>0$.
    \item[(d)]  $\hat{g}_n(\cdot)$ is bounded. For any $t \in [0,T]$, the functional $X_{[t,T]} \mapsto e^{\int_t^T \log g(\tau, X_{\tau})d\tau}$ is continuous and bounded.
\end{itemize}
\end{assumption}
Above, condition (a) is used to ensure the existence and uniqueness of the strong solution for \eqref{eq:SDE} and the convergence of transition kernel in Proposition \ref{prop:EMerror} below. Other conditions are required for the convergence of $Z_{dis}(x)$ to $Z_{con}(x)$ in Proposition \ref{prop:optimalconverge}  below. Also, we will give a concrete example in Example \ref{example} below which satisfies all the conditions in Assumption \ref{ass1}.

We also remark here that the upper bound for the second moment $\sup_{0\leq i \leq n} \mathbb{E}|\hat{X}_{i\eta}|^2$ can be proved if one assumes: (1) $L_b$-Lipschitz condition for $b(\cdot)$; (2) the following confining condition for $b(\cdot)$:
\begin{equation*}
    x \cdot b(x) \leq -C_1 |x|^2 + C_2, \quad \forall x \in \mathbb{R}^d,
\end{equation*}
where $C_1$, $C_2$ are two positive constants. Moreover, under such assumptions, the upper bound $\sup_{0\leq i \leq n} \mathbb{E}|\hat{X}_{i\eta}|^2$ can be shown to be independent of the time $T$ (recall that $T = n\eta$). 

In the following proposition, we establish the convergence of transition kernels in terms of the KL-divergence or the total variation (TV) distance. Note that for two probability measures $\mu$, $\nu$ on $\mathbb{R}^d$, the KL-divergence and TV distance are given by
\begin{equation}
\DKL(\mu\|\nu) :=
\left\{
\begin{aligned}
     &\int_{\mathbb{R}^d}  \log \frac{d\mu}{d\nu} \, \mu(dx),\quad \mu \ll \nu,\\
     &+\infty,\quad \text{otherwise}.
\end{aligned}
\right.
\end{equation}
\begin{equation}
    \text{TV}(\mu,\nu) := \sup_{A\in \mathcal{B}(\mathbb{R}^d)} |\mu(A) - \nu(A)|. 
\end{equation}

\begin{proposition}\label{prop:EMerror}
Consider the Markov transition kernels $\hat{P}^{\eta}$ and $P_\eta (=P_{t=\eta})$ defined in \eqref{eq:discretekernel}, \eqref{eq:FPcon}, respectively. Suppose that the condition (a) in Assumption \ref{ass1} holds. Then for any $x\in\mathbb{R}^d$, as $\eta \rightarrow 0$, $\hat{P}^{\eta}(x,\cdot) \rightarrow P_{\eta}(x,\cdot)$ in terms of KL-divergence or TV distance. In detail, for $\eta < 1$, there exists a positive constant $C$ independent of $\eta$ and $d$ such that
\begin{equation}\label{eq:rateKL}
    \DKL\left( \hat{P}^{\eta}(x,\cdot)  \| P_{\eta}(x,\cdot)   \right) \leq Cd\eta^2 \rightarrow 0 \quad \text{as} \quad \eta \rightarrow 0,
\end{equation}
and
\begin{equation}
    \mathrm{TV}\left( \hat{P}^{\eta}(x,\cdot)  , P_{\eta}(x,\cdot)   \right) \leq C\sqrt{d}\eta \rightarrow 0 \quad \text{as} \quad \eta \rightarrow 0.
\end{equation}
\end{proposition}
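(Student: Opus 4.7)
The plan is to lift the comparison of one-step marginals to a comparison of path measures on $C([0,\eta],\mathbb{R}^d)$ via Girsanov's theorem, descend back to marginals using the data-processing inequality for KL divergence, and finally deduce the TV bound from the KL bound by Pinsker's inequality. Concretely, I view $\hat{P}^\eta(x,\cdot)$ as the time-$\eta$ marginal of the constant-drift process $\hat X_t := x + t\,b(x) + \sqrt{2}\,B_t$, $t\in[0,\eta]$, which is itself the solution of an SDE with drift $b(x)$ and diffusion $\sqrt{2}$, while $P_\eta(x,\cdot)$ is the time-$\eta$ marginal of the true solution of \eqref{eq:SDE} started at $x$. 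Denote their path laws by $\hat{\mathcal{Q}}$ and $\mathcal{Q}$, respectively.

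Since $\hat{\mathcal{Q}}$ and $\mathcal{Q}$ share the same diffusion coefficient and differ only in drift ($b(x)$ versus $b(\cdot)$), Girsanov's theorem yields
\begin{equation*}
\DKL(\hat{\mathcal{Q}}\|\mathcal{Q}) = \frac{1}{4}\,\mathbb{E}^{\hat{\mathcal{Q}}}\!\left[\int_0^\eta |b(\hat X_s) - b(x)|^2\,ds\right],
\end{equation*}
the factor $1/4$ arising from the $\sqrt{2}$ volatility (the Gaussian quadratic form carries inverse diffusion $\tfrac{1}{2}$ and there is an additional $\tfrac{1}{2}$ from the Cameron-Martin exponent). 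Novikov's condition is trivial here because $b$ is Lipschitz and the interval $[0,\eta]$ is bounded, so the Radon-Nikodym derivative is a genuine martingale. Using $|b(\hat X_s) - b(x)|\leq L_b|\hat X_s - x|$ and the explicit representation $\hat X_s - x = s\,b(x) + \sqrt{2}\,B_s$ gives $\mathbb{E}|\hat X_s - x|^2 = s^2 |b(x)|^2 + 2sd$. Integrating over $[0,\eta]$ and using $\eta<1$, $d\geq 1$,
\begin{equation*}
\DKL(\hat{\mathcal{Q}}\|\mathcal{Q}) \leq \frac{L_b^2}{4}\!\left(\tfrac{\eta^3}{3}|b(x)|^2 + d\,\eta^2\right) \leq C\,d\,\eta^2,
\end{equation*}
for some constant $C$ depending on $L_b$ and $|b(x)|$ but independent of $d$ and $\eta$.

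To descend to marginals, I observe that the projection $\hat X \mapsto \hat X_\eta$ is a deterministic map, so the data-processing inequality for KL gives $\DKL(\hat{P}^\eta(x,\cdot)\|P_\eta(x,\cdot)) \leq \DKL(\hat{\mathcal{Q}}\|\mathcal{Q}) \leq C d \eta^2$, which is \eqref{eq:rateKL}. Finally, Pinsker's inequality $\mathrm{TV}(\mu,\nu)^2 \leq \tfrac{1}{2}\DKL(\mu\|\nu)$ yields the TV bound $\mathrm{TV}(\hat P^\eta(x,\cdot), P_\eta(x,\cdot)) \leq \sqrt{C/2}\,\sqrt{d}\,\eta$, completing the proof.

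The main conceptual move is recognizing that one should compare path measures rather than attempt a direct KL computation between the explicit Gaussian $\hat P^\eta(x,\cdot)$ and the non-Gaussian $P_\eta(x,\cdot)$, which would be far less tractable. After that, no step is technically demanding; the two places requiring care are tracking the $1/4$ (not $1/2$) Girsanov constant and noting that $\mathbb{E}|\hat X_s - x|^2$ grows like $s^2|b(x)|^2 + 2sd$, so that integration produces the correct $\eta^2$ (and not merely $\eta$) rate.
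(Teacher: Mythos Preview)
Your proof is correct and follows essentially the same approach as the paper's: both lift to path measures on $[0,\eta]$, apply Girsanov to get the KL as an integrated squared drift difference, use the Lipschitz condition and the explicit form of the Euler step $\hat X_s - x = s\,b(x)+\sqrt{2}B_s$ to bound it by $Cd\eta^2$, then descend to marginals via data processing and finish with Pinsker. The only cosmetic difference is that the paper writes the argument for a general $n$-step discretization on $[0,t]$ (obtaining $\DKL\lesssim dt\eta$, which it reuses in the proof of Proposition~\ref{prop:optimalconverge}) and then specializes to $t=\eta$, whereas you work directly with a single step; your tracking of the explicit Girsanov constant $1/4$ is also slightly more careful than the paper's.
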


Note that the proposition considers the local truncation error of the Euler-Maruyama scheme in terms of TV distance and KL divergence, consistent with the existing results \cite{dalalyan2017theoretical}: consider the solution $X_t$ ($t \geq 0$) to the continuous-time SDE \eqref{eq:SDE} and its Euler-Maruyama discretization $\hat{X}_{n\eta}$ ($n \in \mathbb{N}_{+}$) with time step $\eta$, for any time interval length $T$ such that $\eta$ divides $T$, one can show that 
\begin{equation*}
    D_{KL}\left(\text{Law}(\hat{X}_T)\| \text{Law}(X_T)\right) \lesssim T\eta.
\end{equation*}
This then reduces to \eqref{eq:rateKL} when $T = \eta$. Moreover, recent literature (for instance, \cite{mou2022improved,li2022sharp}) has shown that, if one assumes stronger smoothness conditions for the drift $b(\cdot)$, it is possible to improve the upper bound for the time-discretization error in terms of the KL-divergence from $O(T\eta)$ to $O(T\eta^2)$. Consequently, the rate in \eqref{eq:rateKL} is $O(\eta^3)$. The same arguments also hold for TV distance via Pinsker's inequality.

The next proposition guarantees that under the settings of Assumption \ref{ass1}, the target $Z_{dis}(x)$ converges to $Z_{con}(x)$ as $\eta \rightarrow 0$.

\begin{proposition}\label{prop:optimalconverge}
For $x \in \mathbb{R}^d$, recall the definitions of $Z_{dis}(x)$, $Z_{con}(x)$ in \eqref{eq:ZdisSec3}, \eqref{eq:ZconSec3}, respectively. Then under Assumption \ref{ass1},
\begin{equation}
    Z_{dis}(x) \rightarrow Z_{con}(x)
\end{equation}
pointwise as $\eta \rightarrow 0$.
\end{proposition}

\begin{remark}[explicit convergence rate]
Under current assumptions, one cannot obtain an explicit convergence rate due to conditions (b) and (c) in Assumption \ref{ass1}. However, if we use a stronger version of Assumptions \ref{ass1}, it is possible to obtain an explicit rate. In fact, if we replace conditions (b) and (c) in Assumption \ref{ass1} by the following stronger quantitative version 
\begin{itemize}
    \item[(b')] $g_T(\cdot)$ is bounded. $\left|\hat{g}_n(x) - g_T(x) \right| \leq C_1\eta^{\alpha_1}$, $\left|\eta^{-1}\log \hat{g}_{k\eta}(x) - \log g_{k\eta}(x) \right| \leq C_2\eta^{\alpha_2}$, $0 \leq k \leq n-1$, where $C_1$, $C_2$, $\alpha_1$, $\alpha_2$ are positive constants independent of $x$, $k$.
    \item[(c')]$\log g_s(x)$ is $\alpha_3$-H\"older continuous in $t \in [0,T]$ uniformly in $x$. $\eta^{-1}\log \hat{g}_{k\eta}(x)$ is Lipschitz in $x$ uniformly for all $k$.
\end{itemize}
then, following exactly the same derivation, one can easily obtain for small $\eta$,
\begin{equation}
    |Z_{dis}(x) - Z_{con}(x)| \leq \eta^{\alpha},
\end{equation}
where $\alpha := \min (\frac{1}{2}, \alpha_1, \alpha_2, \alpha_3)$.
\end{remark}

We end this section by giving a concrete example satisfying all the conditions in Assumption \ref{ass1}. In particular, it satisfies the time-continuity condition of $\log g(t,x)$, which might not be very direct at first glance.
\begin{example}[an example satisfying time-continuity and other assumptions]\label{example}
Note that one important assumption is that the function $\log g(t,x)$ is time-continuous. The following example shows that we can indeed find such a time-continuous $\log g(t,x)$.
In fact, we consider the following continuous-time state space model:
\begin{equation}\label{eq:examplenohat}
    X_t = x + \int_0^t b(X_s) ds + \sqrt{2}B_t, \quad Y_t = X_t + \tilde{B}_{t+t_0}, \quad 0\leq t\leq T
\end{equation}
for two independent Brownian motions $(B_t)_{t\geq 0}$, $(\tilde{B}_t)_{t\geq 0}$. Here we introduce the positive constant $t_0$ only to avoid a potential singularity of $1/t$ at $t=0$: As will be discussed below, the convergence would require the time continuity of $- \frac{|x - y_t|^2}{2(t+t_0)}$ in the closed interval $[0,T]$, thus the choice of some $t_0 > 0$.

Correspondingly, for the discrete model, for $0\leq k\leq n$, we define $\hat{Y}_k$ by
\begin{equation}\label{eq:examplehat}
    \hat{X}_{k+1} = \hat{X}_k + \eta b(\hat{X}_k)  + \sqrt{2}(B_{(k+1)\eta} - B_{k\eta}),\quad \hat{Y}_k= \hat{X}_k + \tilde{B}_{k\eta+t_0},\quad 0\leq k \leq n.
\end{equation}

Now suppose that we have been given observation $y_t$ ($0\leq t\leq T$) and $\hat{y}_k$ ($0\leq k \leq n$) (note that by our construction above, the observed process $y_t$ of the continuous model is time-continuous). Clearly, the expressions for $g(t,x)$ and $\hat{g}_k(x)$ are explicit and are of the Gaussian form. Therefore in this case, we are able to convert the conditions in Assumption \ref{ass1} for $g$ and $\hat{g}$ to assumptions for $y$ and $\hat{y}$. Basically, defining $\mathcal{N}(\cdot;\mu,\Sigma)$ to be the density of the $d$-dimensional Gaussian distribution with mean $\mu$ and covariance $\Sigma$, we need the following:
\begin{itemize}
    \item $\mathcal{N}(x;\hat{y}_n,T+t_0) \rightarrow \mathcal{N}(x;y_n,T+t_0)$, $|x - \hat{y}_k|^2 \rightarrow |x-y_k|^2$ uniformly in $x$, $k$ as $\eta \rightarrow 0$.
    \item For any $t \in [0,T]$, the functional $X_{[t,T]} \mapsto e^{\int_t^T  -\frac{|X_{\tau} - y_\tau|^2}{2(t+t_0)}d\tau}$ is bounded.
    \item $- \frac{|x - y_t|^2}{2(t+t_0)}$ is continuous in $t \in [0,T]$ uniformly in $x$. $- \frac{|x - \hat{y}_{k}|^2}{2(k\eta+t_0)}$ is Lipschitz in $x$ uniformly for all $k$.
\end{itemize}
It is then easy to see that these conditions generally require (1) time-continuity for the observation $y_t$, (2) the observation for the two models $y_{k\eta}$, $\hat{y}_k$ are very close. Clearly, since $\hat{X}$ in \eqref{eq:examplehat} is the Euler-Maruyama discretization of $X$ in \eqref{eq:examplenohat}, the continuous observed processes $y$, $\hat{y}$ stay close when the time step $\eta$ is small. 

Furthermore, once these requirements are met, by Proposition \ref{prop:optimalconverge}, as $\eta \rightarrow 0$, the rescaled target
\begin{equation*}
    \tilde{Z}_{dis}(x) := \mathbb{E}_x\left[\prod_{k=0}^{n-1} (\hat{g}_k(\hat{X}_k))^{\eta^{-1}} \hat{g}_n(\hat{X}_n)\right]
\end{equation*}
converges pointwise to
\begin{equation*}
    Z_{con}(x) = \mathbb{E}_x\left[\exp\left(\int_0^{T} \log g(s,X_s) ds\right) g_T(X_{T})\right].
\end{equation*}

\end{example}


\subsection{Proof of Proposition \ref{prop:EMerror}}
\begin{proof}
Consider the following SDEs with the same initial state:
\begin{equation}\label{eq:twoSDEs}
    \begin{aligned}
        &dX_t = b(X_t) dt + \sqrt{2}dB,\quad X_0 = x,\quad 0\leq t \leq T=n\eta,\\
        &d\hat{X}_t = b(\hat{X}_{k\eta})dt + \sqrt{2}dB,\quad \hat{X}_0 = x, \quad t\in[k\eta,(k+1)\eta),\quad 0\leq k\leq n-1.
    \end{aligned}
\end{equation}
Clearly, $P_{\eta}(x,\cdot) = \text{Law}(X_\eta)$ and $\hat{P}^\eta(x,\cdot) = \text{Law}(\hat{X}_\eta)$.
Denote by $P^x_{[0,t]}$, $\hat{P}_{[0,t]}^x$ the path measures with the same initial $x$ associated with the time interval $[0,t]$ for any fixed $t \leq T$. Then using the data processing inequality \cite{beaudry2011intuitive,li2023propagation} and Girsanov's theorem \cite{girsanov1960transforming,durrett2018stochastic}, it holds that
\begin{equation}\label{eq:aftergirsanov}
    \DKL\left(\text{Law}(\hat{X}_t) \| \text{Law}(X_t) \right) \leq \DKL\left(\hat{P}_{[0,t]}^x \| P_{[0,t]}^x \right) \leq \mathbb{E}\Big[\sum_{k=0}^{\lceil t/\eta \rceil} \int_{k\eta}^{(k+1)\eta} |b(\hat{X}_s) - b(\hat{X}_{k\eta})|^2 ds\Big].
\end{equation}
By condition (a) in Assumption \ref{ass1}, $b(\cdot)$ is $L_b$-Lipschitz, and the second  moment for $\hat{X}_{k\eta}$ has uniform bound, then for any $0\leq k \leq \lceil t/\eta \rceil$, one has 
\begin{equation}\label{eq:smallinterval}
\begin{aligned}
    \mathbb{E}\left|b(\hat{X}_t) - b(\hat{X}_{k\eta})\right|^2&\leq L_b^2\mathbb{E}\left|(t-k\eta)b(\hat{X}_{k\eta}) + \int_{k\eta}^t dB_s \right|^2\\
    &\leq 2L_b^2\left(2\eta^2\left(|b(0)|^2 + L_b^2 \sup_{0\leq i \leq n} \mathbb{E}|\hat{X}_{i\eta}|^2\right) + \eta d \right) \leq Cd\eta,
\end{aligned}
\end{equation}
where we need $\eta < 1$ and $C = C\left(L_b, b(0), \sup_{0\leq i \leq n} \mathbb{E}|\hat{X}_{i\eta}|^2 \right)$ is a positive constant.
Combining \eqref{eq:aftergirsanov} and \eqref{eq:smallinterval}, we know that
\begin{equation*}
     \DKL\left(\text{Law}(\hat{X}_t) \| \text{Law}(X_t) \right) \leq Cdt\eta.
\end{equation*}
And by Pinsker's inequality, we have
\begin{equation*}
    \text{TV}\left(\text{Law}(\hat{X}_t) , \text{Law}(X_t) \right) \leq C\sqrt{dt\eta}.
\end{equation*}
Finally, taking $t = \eta$, we obtain the desired result.
\end{proof}

\subsection{Proof of Proposition \ref{prop:optimalconverge}}
\begin{proof}
Fix $T>0$, $\eta>0$, $x\in \mathbb{R}^d$. Recall that $T = n\eta$, and
\begin{equation*}
    Z_{dis}(x) := \mathbb{E}_x\left[\prod_{k=0}^{n-1} \hat{g}_k(\hat{X}_k)\hat{g}_n(\hat{X}_n)\right] = \mathbb{E}_x\left[\exp\left(\sum_{k=0}^{n-1}\int_{k\eta}^{(k+1)\eta} \log \left(\hat{g}_k(\hat{X}_k)\right)^{\eta^{-1}}ds\right)\hat{g}_n(\hat{X}_n)\right],
\end{equation*}
\begin{equation*}
        Z_{con}(x) = \mathbb{E}_x\left[e^{\int^{T}_{0} \log g(s,X_s) ds} g_T(X_T)  \right].
\end{equation*}
Then, for
\begin{equation*}
    \begin{aligned}
        &dX_t = b(X_t) dt + \sqrt{2}dB,\quad X_0 = x,\quad 0\leq t \leq T=n\eta,\\
        &d\hat{X}_t = b(\hat{X}_{k\eta})dt + \sqrt{2}dB,\quad \hat{X}_0 = x, \quad t\in[k\eta,(k+1)\eta),\quad 0\leq k\leq n-1,
    \end{aligned}
\end{equation*}
we have
\begin{equation*}
\begin{aligned}
    &\quad Z_{con}(x) - Z_{dis}(x)\\
    &= \mathbb{E}_x\left[\exp\left(\int^{T}_{0} \log g(s,X_s) ds\right) g_T(X_T) \right] - \mathbb{E}_x\left[\exp\left(\int^{T}_{0} \log g(s, \hat{X}_s) ds\right) g_T(\hat{X}_T) \right]\\
    &+\mathbb{E}_x\left[\exp\left(\int^{T}_{0} \log g(s, \hat{X}_s) ds\right) \left( g_T(\hat{X}_T) -\hat{g}_n(\hat{X}_T) \right) \right]\\
    &+\mathbb{E}_x\left[\hat{g}_n(\hat{X}_{T}) \Big(\prod_{i=0}^{n-1}\exp\big(\int_{i\eta}^{(i+1)\eta}\log g(s,\hat{X}_s)ds\big) - \prod_{i=0}^{n-1}\exp\big(\int_{i\eta}^{(i+1)\eta}\eta^{-1}\log \hat{g}_{i\eta}(\hat{X}_{i\eta})ds\big) \Big) \right],
\end{aligned}
\end{equation*}

For the first term above, by condition (d) in Assumption \ref{ass1}, $\exp(\int^{T}_{k\eta} \log g(s, X_s) ds) g_T(X_{T})$ is a continuous, bounded functional of $X_{[k\eta,T]}$ with $X_{k\eta} = x$, denote by $F_k(X_{[k\eta,T]})$. Then, using the KL upper bound for path measures obtained in \eqref{eq:aftergirsanov}, we have
\begin{equation*}
\begin{aligned}
    &\quad\mathbb{E}_x\left[\exp\left(\int^{T}_{0} \log g(s,X_s) ds\right) g_T(X_T) \right] - \mathbb{E}_x\left[\exp\left(\int^{T}_{0} \log g(s, \hat{X}_s) ds\right) g_T(\hat{X}_T) \right]\\
    &=\mathbb{E}_x\left[F_0(X_{[0,T]}) \right] - \mathbb{E}_x\left[F_0(\hat{X}_{[0,T]}) \right]\\
    &=\int F_0(y) \left(P_{[0,T]}^x(y) - \hat{P}_{[0,T]}^x(y) \right)dy\\
    &\leq C \,\text{TV}(P_{[0,T]}^x , \hat{P}_{[0,T]}^x)
    \leq C \DKL(\hat{P}_{[0,T]}^x \| P_{[0,T]}^x)^{\frac{1}{2}}
    \leq C\sqrt{Td\eta} \rightarrow 0,
\end{aligned}
\end{equation*}
where we have used Pinsker's inequality in the last line above.

For the second term, by conditions (b), (d) in Assumption \ref{ass1}, $\exp(\int^{T}_{0} \log g(s, \hat{X}_s) ds)$ is bounded, and $\hat{g}_n(x) \rightarrow g_T(x)$ uniformly in $x$. Therefore, as $\eta \rightarrow 0$,
\begin{equation*}
    \mathbb{E}_x\left[\exp\left(\int^{T}_{0} \log g(s, \hat{X}_s) ds\right) \left( g_T(\hat{X}_T) -\hat{g}_n(\hat{X}_T) \right) \right] \rightarrow 0.
\end{equation*}

For the third term, for $s \in [i\eta,(i+1)\eta)$,
\begin{multline*}
    |\log g(s,\hat{X}_s) - \eta^{-1}\log\hat{g}_{i\eta}(\hat{X}_{i\eta})| \leq |\log g(s,\hat{X}_s) - \log g(i\eta,\hat{X}_{s})|\\ + |\log g(i\eta, \hat{X}_s) - \eta^{-1}\log\hat{g}_{i\eta}(\hat{X}_{s})| + |\eta^{-1}\log\hat{g}_{i\eta}(\hat{X}_s) - \eta^{-1}\log\hat{g}_{i\eta}(\hat{X}_{i\eta})|.
\end{multline*}
By conditions (b), (c) in Assumption \ref{ass1}, as $\eta \rightarrow 0$, $|\log g(s,x) - \log g(i\eta,x)|$, $|\log g(i\eta,x) - \eta^{-1}\log\hat{g}_{i\eta}(x)|$ tend to 0 uniformly in $s$, $i$, $x$. Moreover, by condition (c) in Assumption \ref{ass1}, $|\log\hat{g}_{i\eta}(\hat{X}_s) - \log\hat{g}_{i\eta}(\hat{X}_{i\eta})| \lesssim |\hat{X}_{s} - \hat{X}_{i\eta} | \lesssim \eta |b(\hat{X}_{i\eta})| + |W_s - W_{i\eta}| $.
Hence, as $\eta \rightarrow 0$,
\begin{equation*}
    \mathbb{E}_x\left[\hat{g}_n(\hat{X}_{T}) \Big(\prod_{i=0}^{n-1}\exp\big(\int_{i\eta}^{(i+1)\eta}\log g(s,\hat{X}_s)ds\big) - \prod_{i=0}^{n-1}\exp\big(\int_{i\eta}^{(i+1)\eta}\eta^{-1}\log \hat{g}_{i\eta}(\hat{X}_{i\eta})ds\big) \Big) \right] \rightarrow 0.
\end{equation*}
Combining all the above, we conclude that for any fixed $x \in \mathbb{R}^d$,
\begin{equation*}
    | Z_{dis}(x) - Z_{con}(x) |  \rightarrow 0 \quad \text{as} \quad \eta \rightarrow 0.
\end{equation*}
\end{proof}

\bibliographystyle{plain}
\bibliography{main}

\end{document}